\documentclass[aps,physrev,reprint,superscriptaddress,longbibliography,floatfix]{revtex4-2}
\usepackage[T1]{fontenc}
\usepackage{amsmath,amssymb,amsthm,bm}
\usepackage{graphicx,booktabs,microtype,xcolor}
\usepackage{xurl}
\usepackage[colorlinks=true,linkcolor=blue!50!black,citecolor=blue!50!black,urlcolor=blue!50!black]{hyperref}
\newcommand{\CodeRepositoryURL}{https://github.com/Fumin111994/Unbounded-degree-overhead-for-Alice-conditioned-quantum-Bell-certificates}
\newcommand{\CodeRevision}{5d7404ed8e1115a68b812d83e76f23b0e9d6db83}
\newcommand{\DataRecordURL}{https://zenodo.org/records/22672612}
\newcommand{\DataDOIURL}{https://doi.org/10.5281/zenodo.22672612}
\newcommand{\codepath}[1]{\href{\CodeRepositoryURL/blob/\CodeRevision/#1}{\nolinkurl{#1}}}
\newcommand{\datapath}[1]{\href{\DataRecordURL}{\nolinkurl{#1}}}

\newtheorem{theorem}{Theorem}
\newtheorem{proposition}{Proposition}

\newtheorem{corollary}{Corollary}
\newcommand{\Bell}{\mathcal B}
\newcommand{\D}{\mathsf D}
\newcommand{\Os}{\mathsf O}

\begin{document}
\title{Unbounded degree overhead for Alice-conditioned quantum Bell certificates}
\author{Fumin Wang}
\email[Corresponding author:~]{fwang1991@xjtu.edu.cn}
\affiliation{MED-X Institute, The First Affiliated Hospital of Xi'an Jiaotong University, Xi'an 710061, China}
\affiliation{Shaanxi Key Laboratory of Quantum Information and Quantum Optoelectronic Devices, College of Physics, Xi'an Jiaotong University, Xi'an 710049, China}
\date{September 9, 2026}
\begin{abstract}
Requiring each sum-of-squares term to involve only one of Alice's
measurement questions can impose an unbounded certification cost.
In the simplest Bell scenario, we prove that no finite level of the
Alice-conditioned NPA hierarchy contains all standard level-two Bell
certificates. An explicit family of truncated positive functionals on
the infinite dihedral group exceeds the tilted-CHSH quantum bound at
every prescribed finite level, while a standard degree-two certificate
is exact. A Fej\'er-weighted trace reduces positivity to a rank-one
subtraction from a moving-average Gram matrix. The required conditioned
level grows at least as $(2-\alpha)^{-1/2}$ near the endpoint tilt.
This bounds the degree of exact nice-SOS inputs to compiled-game
soundness proofs. Consequently, no finite
conditioned level certifies the entire optimal CHSH randomness tradeoff
against quantum side information, although standard level two does.
Away from the endpoint, we prove a sharp one-level cost throughout
$\alpha\in[13/10,3/2]$, using optimal-strategy kernels and exact Bernstein
matrix positivity to certify a continuous interval. The results separate
ordinary SOS degree from the resources imposed by single-question
certificate structure.
\end{abstract}
\maketitle
\raggedbottom

\section{Introduction}
\label{sec:introduction}

Certifying quantum Bell bounds is central to nonlocality and
device-independent quantum information. The
Navascu\'es--Pironio--Ac\'in (NPA) hierarchy replaces an unrestricted
operator problem by a sequence of semidefinite programs (SDPs)
\cite{NPA2008}. Its dual expresses a Bell-gap polynomial as a sum of
squares (SOS). The level specifies the operator words available to a
certificate, and hence an algebraic resource.

Alice-conditioned hierarchies organize this resource differently: Bob's operator
words form moment blocks indexed by Alice's question and answer. Their
duals constrain each square to involve only one of Alice's questions
\cite{CuiFalorNatarajanZhang2025}. This is the ``one-sided'' hierarchy of
that reference; here the term Alice-conditioned distinguishes certificate
structure from a Bell functional with only one marginal tilt.
Closely related sparse certificates and
sequential hierarchies arise in compiled nonlocal games
\cite{KLVY2023,KlepEtAl2025,MehtaPaddockWooltorton2025}.
In the nice-SOS route to KLVY-compiled soundness, a certificate of a
Bell bound supplies a bound on the compiled value with an additional
cryptographic error; the quantitative guarantee depends on the
certificate \cite{CuiFalorNatarajanZhang2025}. Thus certificate structure
constrains a proof resource used beyond ordinary Bell optimization.
Their convergence motivates a finite-level
question: how much additional degree can the single-question restriction
require, even when a low-degree ordinary certificate is already exact?

The additional degree has no uniform finite bound, already for tilted
CHSH with standard degree fixed at two. We construct a counterexample
for every conditioned level, with a quantitative lower bound near the
endpoint tilt. A separate interval theorem identifies a regime where
exactly one extra level suffices. Both statements transfer to certifying
Alice's output randomness against quantum side information.

\section{Hierarchies and certification cost}
\label{sec:hierarchies}
We consider two binary questions and two binary outcomes per party.
Projectors satisfy idempotency, completeness, orthogonality of outcomes,
and cross-party commutation. One outcome per question is eliminated.
Standard level $k$ uses words of total reduced length at most $k$;
Alice-conditioned level $k$ uses Bob words of length at most $k$ in each block
$\Phi_{a|x}$, with
\begin{equation}
 \begin{gathered}
 \Phi_{a|x}\succeq0,\quad \sum_a\Phi_{a|0}[I,I]=1,\\
 \sum_a\Phi_{a|x}\text{ independent of }x.
 \end{gathered}
 \label{eq:os}
\end{equation}
The comparison uses these native levels and the raw PVM quotient, without
additional probability inequalities. It does not identify a POVM
localizer degree with a PVM word degree. The supplementary material (SM)\cite{SupplementalMaterial}
records the convention audit and outcome-complete checks.
Adding joint-probability positivity to standard level two changes
nothing: $P_AP_B=(P_AP_B)^\dagger(P_AP_B)$ is already a tested square.

Let $\omega_k^{\rm std}(G)$ and $\omega_k^{\rm os}(G)$ denote the two SDP
values, retaining the superscript $\mathrm{os}$ for the Alice-conditioned
hierarchy. Let $\D_k$ and $\Os_k$ be the respective cones of actual SOS
certificates in real Bell-gap space. At every finite native level both
primals are compact and strictly feasible in the reduced binary basis.
Dual attainment makes these cones equal to their closed Bell-gap duals
(Appendix~\ref{app:certificates}). For a target bound $\beta$,
define
\begin{equation}
 d_{\rm std}(G,\beta)=\min\{k:\beta I-\Bell_G\in\D_k\},
 \label{eq:degree}
\end{equation}
and define $d_{\rm os}$ using $\Os_k$, with $\min\varnothing=\infty$.
Specifying $\beta$ separates exact quantum certification from
certification at a nonzero tolerance.

\section{Unbounded degree overhead}
\label{sec:unbounded}
For observables $A_x^2=B_y^2=I$, set
\begin{equation}
 F_\alpha=\alpha A_0+A_0(B_0+B_1)+A_1(B_0-B_1).
 \label{eq:tilted}
\end{equation}
For $0\leq\alpha<2$, its quantum maximum is
$q(\alpha)=\sqrt{8+2\alpha^2}$
\cite{Acin2012Tilted,BampsPironio2015}. The known SOS decomposition is
already available at the intermediate $1{+}AB$ level, hence at standard
level two \cite{BampsPironio2015}.
We reverify that SOS symbolically for every $0\leq\alpha<2$, together
with the attaining strategy. Thus
$\omega^{1+AB}(F_\alpha)=\omega_2^{\rm std}(F_\alpha)=q(\alpha)$ (SM).

\begin{theorem}[No uniform finite-degree conversion]
\label{thm:unbounded}
For every integer $k\geq1$, set
\begin{equation}
 r_k=\frac{1}{8k^2+2},\qquad \alpha_k=2-8r_k.
 \label{eq:rk}
\end{equation}
Then $0<\alpha_k<2$ and
\begin{equation}
 q(\alpha_k)I-F_{\alpha_k}\in\D_2\setminus\Os_k.
 \label{eq:unbounded}
\end{equation}
More generally, $d_{\rm os}(F_{2-\epsilon},q(2-\epsilon))
=\Omega(\epsilon^{-1/2})$ as $\epsilon\downarrow0$, whereas
$d_{\rm std}=2$.
\end{theorem}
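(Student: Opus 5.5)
Membership in $\D_2$ follows from the Bamps--Pironio decomposition quoted above: it holds at the $1{+}AB$ level for all $0\le\alpha<2$, so $q(\alpha)I-F_\alpha\in\D_2$ and $d_{\rm std}\le2$. Level one cannot certify the bound, since a level-one moment matrix admits values above $q(\alpha)$ (e.g.\ the almost-quantum/level-one relaxation of tilted CHSH is strictly loose off $\alpha=0$). This gives $d_{\rm std}=2$. The elementary bound $0<\alpha_k<2$ is immediate from $0<8r_k<2$.

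The substance is the separation $q(\alpha_k)I-F_{\alpha_k}\notin\Os_k$. Since $\Os_k$ equals its closed Bell-gap dual (Appendix~\ref{app:certificates}), it suffices to exhibit a feasible point of the Alice-conditioned level-$k$ SDP~\eqref{eq:os} with Bell value strictly larger than $q(\alpha_k)$.

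I would build this point from the group algebra of the infinite dihedral group $D_\infty=\langle B_0,B_1\rangle$, which is free on two involutions. Bob's reduced words are the alternating words, indexed by $n\in\mathbb Z$ through powers of $u=B_0B_1$ together with $u^nB_0$.

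Each block $\Phi_{a|x}$ will be a truncated functional $\phi_{a|x}$ on words of length $\le 2k$, given by a trace on $D_\infty$ with Fej\'er-type weights on $u^n$. The blocks are chosen so that $\sum_a\phi_{a|x}$ is the same functional for $x=0,1$ and the normalization holds. Concretely, I would take:
\begin{itemize}
\item the $x=1$ blocks to be tracial functionals whose kernel is the optimal-strategy kernel, i.e.\ the restriction of the Bob-side correlations of the qubit strategy attaining $q$;
\item the $x=0$ blocks to be perturbed so that $\langle A_0\rangle$ and $\langle A_0(B_0+B_1)\rangle$ increase.
\end{itemize}
The truncation at length $k$ leaves enough slack that positivity survives. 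The Fej\'er weight is what makes the full (untruncated) Toeplitz symbol nonnegative.

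The key reduction is this. Restricted to words of length $\le k$, the Gram/Toeplitz matrix of a Fej\'er-weighted trace factors as a moving-average Gram matrix $M_k=V^{\!\top}V$, where $V$ is a banded averaging operator of width $\sim k$. The required perturbation in the $x=0$ blocks then appears as a rank-one subtraction $M_k-t\,vv^{\!\top}$. By the Schur complement, this matrix is positive semidefinite exactly when $t\,v^{\!\top}M_k^{-1}v\le1$.

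With $v$ the vector of values of the trigonometric polynomial at the relevant angle, $v^{\!\top}M_k^{-1}v$ can be computed or bounded in closed form. I expect it to behave like $\Theta(k^2)$, by a Christoffel-function/Fej\'er-kernel estimate. The largest admissible $t$ is therefore of order $1/k^2$, and it produces a Bell-value excess over $q(\alpha)$ proportional to $t$ minus a cost term proportional to $2-\alpha$. The choice $r_k=1/(8k^2+2)$ is designed so that at $\alpha_k$ the excess is exactly positive.

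The main obstacle is this step: choosing the weights and the perturbation direction so that $M_k$ and $v$ are explicit enough to evaluate $v^{\!\top}M_k^{-1}v$ exactly, and then checking that the resulting value beats $q(\alpha_k)$. To make the inverse tractable, I would first try weights $1-|n|/(k+1)$, which gives a bidiagonal $V$.

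For the asymptotic statement, I would rerun the construction at general $\epsilon=2-\alpha$. For any $k$ with $8k^2+2\lesssim 8/\epsilon$, the same functional, which is monotone in $\alpha$, still exceeds $q(2-\epsilon)$. This forces $d_{\rm os}\ge c\,\epsilon^{-1/2}$.
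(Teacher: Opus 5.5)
\textbf{Gap: the separating witness is not constructed, and your stated choice for it would fail.} Your skeleton matches the paper's: weak duality via a feasible conditioned level-$k$ point, a Fej\'er-weighted positive functional, a rank-one subtraction controlled by $v^{T}M^{-1}v$, and $r_k\asymp k^{-2}$. The $\D_2$ membership is also fine. The $\Omega(\epsilon^{-1/2})$ deduction works once a witness exists: the witness value is affine in $\alpha$ and equals $4=q(2)$ at $\alpha=2$, so concavity of the gap gives separation on all of $[\alpha_k,2)$.

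But the witness is the whole content of $\notin\Os_k$, and you never construct it. The one concrete choice you make cannot work. Suppose the $x=1$ blocks are the conditional Bob functionals of the qubit strategy attaining $q$. Then the common Bob functional is $w\mapsto\operatorname{tr}(\rho_B\pi(w))$ for a two-dimensional representation $\pi$ with $\rho_B$ invertible. Its truncated Gram has rank at most four, so both PSD $x=0$ blocks, which sum to it, vanish on its kernel and factor through $\pi$. For $k\geq3$ the words of length at most $k-1$ already span $M_2(\mathbb R)$. The moment relations $\Phi(gu,v)=\Phi(u,gv)$, for $g\in\{B_0,B_1\}$ and $|u|,|v|\leq k-1$, then force $\phi_{a|0}=\operatorname{tr}(\sigma_a\pi(\cdot))$ with $\sigma_a\succeq0$ and $\sum_a\sigma_a=\rho_B$. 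By the Schr\"odinger--HJW theorem the resulting assemblage is quantum, so its value is at most $q(\alpha)$ for any choice of weights. The separating point is therefore not a perturbation of the optimal qubit strategy. It also does not raise both $\langle A_0\rangle$ and the CHSH value.

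\textbf{The missing idea: make the common Bob functional itself the Fej\'er object.} Place it near the local endpoint $(S,\langle A_0\rangle)=(2,1)$.
\begin{itemize}
\item Let $\tau_k(f)=\tau(H_kf)$ with $H_k=k^{-1}S_k^*S_k$ and $S_k=\sum_{j<k}(B_0B_1)^j$; $H_k$ is central and positive.
\item Set $L_y(f)=\tfrac12\tau_k((I+B_y)f)$ and $L=L_0+L_1$.
\item Take $(\phi_{0|0},\phi_{1|0},\phi_{0|1},\phi_{1|1})=(L-r_k\chi,\,r_k\chi,\,L_0,\,L_1)$, where $\chi$ is the sign character $\chi(B_y)=-1$.
\end{itemize}
So a rare Alice outcome carries a deterministic Bob response. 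That response is not dominated by $L$ on the full algebra, but it is dominated after truncation.

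On the $2k+1$ path-ordered reduced words, the Gram of $L$ is exactly $M_{ij}=1-|i-j|/N$ with $N=2k$. This is the Gram of normalized window indicators, so $M\succ0$. The Gram of $\chi$ is $vv^T$ with $v_i=(-1)^i$. The explicit vector $z_i=(-1)^ix_i$, with $x_0=x_N=N+1$ and $x_i=2N$ otherwise, solves $Mz=v$. Hence $v^TM^{-1}v=8k^2+2$ exactly, which is where $r_k$ comes from; no Christoffel-type estimate is needed.

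The resulting point has bias $1-2r_k$ and CHSH value $2+4r_k$. Therefore $w^2-q(2-\epsilon)^2=\epsilon[16r_k-(1+4r_k-4r_k^2)\epsilon]$, which equals $64r_k^2(1-2r_k)^2$ at $\alpha_k$.

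\textbf{Smaller slip on $d_{\rm std}=2$.} The right input is $\omega_1^{\rm std}(F_\alpha)=2\sqrt2+\alpha>q(\alpha)$. The almost-quantum ($1{+}AB$) relaxation is exact for this family, as the Bamps--Pironio SOS you cite shows, so it is not the loose one.
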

\begin{proof}
Let $\tau$ be the canonical trace on
$\mathbb Z_2*\mathbb Z_2=\langle B_0,B_1\mid B_y^2=I\rangle$,
and let $\chi(B_y)=-1$ be its sign character. Put $U=B_0B_1$ and
\begin{equation}
 H_k=\frac1k\left(\sum_{j=0}^{k-1}U^j\right)^*
                 \left(\sum_{j=0}^{k-1}U^j\right).
 \label{eq:fejer-weight}
\end{equation}
This positive Fej\'er weight is central, so $\tau_k(f)=\tau(H_kf)$
is a positive trace. Define $L_y(f)=\tau_k[(I+B_y)f]/2$ and
$L=L_0+L_1$. The four block functionals are
\begin{equation}
 (\phi_{0|0},\phi_{1|0},\phi_{0|1},\phi_{1|1})
 =(L-r_k\chi,r_k\chi,L_0,L_1).
 \label{eq:path-functionals}
\end{equation}
They have common outcome sum $L$ and $L(I)=1$.
Order the $2k+1$ reduced Bob words along their Cayley path and put
$N=2k$. The Gram matrix of $L$ is $M_{ij}=1-|i-j|/N$,
$0\leq i,j\leq N$. It is positive definite: it is the Gram matrix
of the linearly independent vectors
$N^{-1/2}\boldsymbol1_{\{i,\ldots,i+N-1\}}$.
The character Gram is $vv^T$, with $v_i=(-1)^i$ up to a common sign.
The remaining blocks are positive because $L_y$ and $\chi$ are
positive functionals.

The vector $z_i=(-1)^ix_i$, with $x_0=x_N=N+1$ and
$x_i=2N$ for $0<i<N$, satisfies $Mz=v$ by alternating summation.
Consequently,
\begin{equation}
 v^TM^{-1}v=2(N+1)+2N(N-1)=8k^2+2=r_k^{-1}.
 \label{eq:path-inverse}
\end{equation}
The rank-one PSD criterion gives $M-r_kvv^T\succeq0$.
The degree-preserving change $P_y=(I+B_y)/2$ transfers these blocks
to the original PVM hierarchy. All moment identities hold by construction.

This feasible witness has Alice bias $1-2r_k$ and CHSH value $2+4r_k$.
At $\alpha=2-\epsilon$ its value $w=4-(1-2r_k)\epsilon$ satisfies
\begin{equation}
 w^2-q(2-\epsilon)^2
 =\epsilon[16r_k-(1+4r_k-4r_k^2)\epsilon].
 \label{eq:path-gap}
\end{equation}
It is strictly positive for
$0<\epsilon<t_k:=16r_k/(1+4r_k-4r_k^2)$; in particular, at
$\epsilon=8r_k$ it equals $64r_k^2(1-2r_k)^2$.
Weak duality excludes an $\Os_k$ certificate. Since $t_k\sim2/k^2$,
the same inequality gives
$\liminf_{\epsilon\downarrow0}\sqrt\epsilon\,d_{\rm os}\geq\sqrt2$ (SM).
\end{proof}

\begin{figure}[!tb]
 \includegraphics[width=\columnwidth]{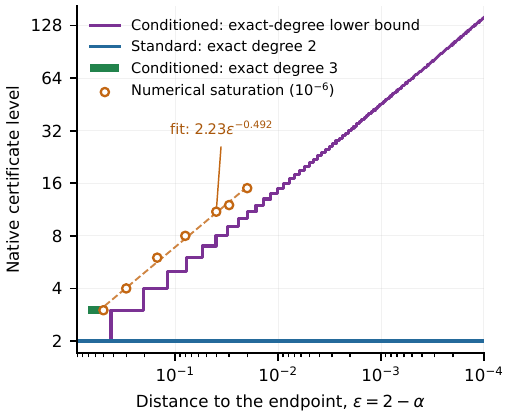}
 \caption{Unbounded cost of restricting certificate structure.
 The purple staircase is the proved lower bound $d_{\rm os}\geq k+1$
 whenever $0<\epsilon<t_k$. It bounds the exact conditioned degree.
 Orange points show numerical saturation within $10^{-6}$, a distinct
 finite-tolerance quantity; they do not certify exact closure.
 Standard degree remains two.
 The green segment marks the separately proved interval where conditioned
 degree is exactly three. Both axes are logarithmic. The SM reports
 solver residuals and the sensitivity of the numerical points.}
 \label{fig:gap}
\end{figure}

In particular, $k=3$ gives $\alpha=70/37$ and witness value $5332/1369$,
strictly above $q=4\sqrt{1297}/37$.
Thus $\D_2\subseteq\Os_3$ is false. The quantifiers in
Theorem~\ref{thm:unbounded} allow the tilt to depend on the level;
they do not assert nonexactness at every finite level for one fixed tilt.

The obstruction is a rare Alice outcome carrying a deterministic Bob
response of weight $r_k$. At finite word length, subtracting that response
from the common Bob functional preserves every tested positivity
constraint. The admissible weight is now of order $k^{-2}$: the
Fej\'er weight concentrates the common Bob functional as the tested
word length grows. Tilting toward
the local endpoint exposes this limitation of the truncated
single-question description.

\section{Exact certification on a continuous interval}
\label{sec:interval}

\subsection{Quantum bounds and separation}
\begin{theorem}[Exact quantum certification and separation]
\label{thm:quantum}
For the family in Eq.~\eqref{eq:tilted}:
\begin{align}
 \omega_2^{\rm os}(F_\alpha)&=q(\alpha),
 &&\alpha\in\{1,5/4,41/32\};\label{eq:closure2}\\
 \omega_2^{\rm os}(F_\alpha)&>q(\alpha),
 &&\alpha\in(\alpha_-,2);\label{eq:nonclosure}\\
 \omega_3^{\rm os}(F_\alpha)&=q(\alpha),
 &&\alpha\in[13/10,3/2],\label{eq:closure3}
\end{align}
where $\alpha_-\approx1.293774$ is an exactly specified algebraic
coverage boundary. Consequently, for every
$\alpha\in[13/10,3/2]$,
\begin{equation}
 d_{\rm std}(F_\alpha,q(\alpha))=2,
 \qquad d_{\rm os}(F_\alpha,q(\alpha))=3.
 \label{eq:sharp}
\end{equation}
\end{theorem}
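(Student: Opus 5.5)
The upper bounds in \eqref{eq:closure2} and \eqref{eq:closure3} will come from explicit dual certificates, and the strict lower bound in \eqref{eq:nonclosure} from an explicit primal witness. Equality with $q(\alpha)$ needs only the upper bound, because the optimal tilted-CHSH strategy \cite{Acin2012Tilted,BampsPironio2015} induces feasible blocks $\Phi_{a|x}$ at every level. This gives $\omega_k^{\rm os}\geq q(\alpha)$. For \eqref{eq:closure2} we exhibit, at each of the three rational tilts, an exact decomposition
\[
q(\alpha)I-F_\alpha=\sum_{a,x}\sum_j S_{a,x,j}^\dagger P_{a|x} S_{a,x,j}
\]
modulo the PVM quotient, with $S_{a,x,j}$ Bob words of length $\le2$. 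Equivalently, we give PSD Gram matrices $G_{a|x}$ over $\mathbb Q(\sqrt{8+2\alpha^2})$ satisfying the linear constraints dual to \eqref{eq:os}.

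I would obtain these by solving the level-two SDP numerically and rounding onto the face singled out by the optimal strategy. The kernels of the $G_{a|x}$ must contain the moment vectors of that strategy, since complementary slackness forces it. Imposing these kernels together with the linear identities leaves a small affine family, and one rational or quadratic point of that family will be verified PSD exactly.

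For \eqref{eq:nonclosure} I would construct a level-two primal witness that beats $q(\alpha)$, reusing the mechanism of Theorem~\ref{thm:unbounded}. Take a common Bob functional $L$ and subtract a small weight $r$ times a deterministic character from one Alice outcome. At level two the relevant Bob path has few words, so the admissible $r$ is governed by $v^TM^{-1}v$ for a fixed small matrix. Optimizing jointly over $r$ and over the free moments of $L$ (not only the Fejér choice) gives a value $w(\alpha)$ with $w(\alpha)^2>q(\alpha)^2$ exactly when $\alpha$ exceeds a root of an explicit polynomial. That root defines $\alpha_-$. Positivity at a fixed small size reduces to finitely many algebraic inequalities in $\alpha$, so the bound on $(\alpha_-,2)$ can be proved by sign analysis of a univariate polynomial.

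The main obstacle is \eqref{eq:closure3}, which requires a single certificate valid on a continuum of $\alpha$. The plan has three steps.
\begin{itemize}
\item[(i)] Parametrize the level-three Gram matrices $G_{a|x}(\alpha)$ as polynomial or rational functions of $\alpha$ and $q(\alpha)$. Their kernels are forced to contain the optimal-strategy kernel vectors, which vary algebraically with $\alpha$.
\item[(ii)] Substitute $\alpha=13/10+t/5$ with $t\in[0,1]$, and clear $q$ by a rational parametrization of the conic $q^2=8+2\alpha^2$, so that the entries become polynomials in a single parameter $s$ on an interval.
\item[(iii)] After projecting out the known kernel, show the reduced matrices are positive definite on the whole interval. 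Expand them in the Bernstein basis and check that the Bernstein coefficient matrices are PSD, or diagonally dominant after exact $LDL^T$. Subdivide the interval where this fails.
\end{itemize}
A nonnegative combination of PSD matrices is PSD, so this certifies the whole interval with finitely many exact rational computations. The delicate point is choosing the free parameters of the affine family in (i) so that the reduced matrices stay strictly positive up to the endpoints. I would fix them by interpolating numerical central solutions at a few tilts, then adjust.

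The sharp statement \eqref{eq:sharp} then follows. We have $d_{\rm std}=2$ by the known SOS. Since $13/10>\alpha_-$, \eqref{eq:nonclosure} with weak duality gives $q(\alpha)I-F_\alpha\notin\Os_2$, and level one is excluded a fortiori. Finally, \eqref{eq:closure3} together with dual attainment (Appendix~\ref{app:certificates}) gives membership in $\Os_3$.
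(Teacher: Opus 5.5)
Your plan for the two closure statements follows the paper's route: optimal-face kernel reduction, exact elimination with only the free coordinates rationalized, a rational parametrization of $q^2=8+2\alpha^2$, and Bernstein positivity on the parameter interval. The gap is in the non-closure part, and it breaks the sharp degree statement. Your witness family subtracts a character from one Alice outcome, as in Theorem~\ref{thm:unbounded}. It provably cannot beat $q(\alpha)$ at $\alpha=13/10$, however $L$ and $r$ are optimized. To see this, take any level-two witness with $\phi_{0|0}=L-r\chi$, $\phi_{1|0}=r\chi$, $\chi(B_y)=-1$, and $\phi_{0|1}+\phi_{1|1}=L$. Its value is $w=\alpha(1-2r)+2+4r-2\delta$, where $\delta=\phi_{0|1}(I-B_0)+\phi_{1|1}(I-B_1)\geq0$. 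Test the first block on $f=B_0B_1-B_0+I-B_1+B_1B_0$. Since $\chi(f)=5$, this gives $25r\leq L(f^*f)$. Now write $f=B_0B_1+(I-B_1)(I-B_0)=B_1B_0+(I-B_0)(I-B_1)$. The identity $(I-B_0)(I-B_1)^2(I-B_0)=8(I-B_0)-[(I+B_1)(I-B_0)]^*[(I+B_1)(I-B_0)]$ bounds the second piece at level two. The triangle inequality in each block's Gram seminorm, combined with Cauchy--Schwarz across the two blocks, then gives $L(f^*f)\leq(1+\sqrt{8\delta})^2$. Maximizing over $\delta\geq0$ yields
\[
 w\leq 2+\alpha+\frac{4-2\alpha}{9+8\alpha}.
\]
At $\alpha=13/10$ this bound equals $3271/970$, and $3271^2=10699441<11.38\cdot970^2=10707442$, so $w<q(13/10)$. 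The bound stays below $q(\alpha)$ up to about $1.313$. With full CHSH ($\delta=0$) one even has $r\leq1/25$, which separates only above about $1.445$.

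So your optimization cannot produce $\alpha_-\approx1.293774$, the step ``$13/10>\alpha_-$'' fails, and $d_{\rm os}\geq3$ is not established near the left end of $[13/10,3/2]$. The missing idea is to use unstructured witnesses. The paper solves the full level-two SDP at anchor tilts, rationalizes each solution with a small admixture of an exact feasible point, and verifies it exactly. The feasible body does not depend on $\alpha$, so each fixed witness gives $\omega_2^{\rm os}(F_\alpha)\geq g_i+\alpha h_i$ for every $\alpha$. Separation then reduces to the rational quadratic inequality $(g_i+\alpha h_i)^2>8+2\alpha^2$. Ten overlapping anchors cover $(\alpha_-,\alpha_+)$, and the $k=2$ Fej\'er witness covers $(254/161,2)$. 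This also avoids certifying an $\alpha$-dependent optimized witness, as your plan would require. A smaller omission: the known SOS gives only $d_{\rm std}\leq2$. You also need $\omega_1^{\rm std}(F_\alpha)=2\sqrt2+\alpha>q(\alpha)$, or else $\D_1\subseteq\Os_1\subseteq\Os_2$ together with level-two non-closure.
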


Equation~\eqref{eq:closure2} follows from three algebraic dual certificates.
Equation~\eqref{eq:closure3} follows from a rational-function family,
constructed below, and the attaining quantum strategy. All identities
and positivity checks are exact. The standard degree is exactly two
because $\omega_1^{\rm std}(F_\alpha)=2\sqrt2+\alpha>q(\alpha)$ for
$0<\alpha<2$ (SM).

For Eq.~\eqref{eq:nonclosure}, each exact rational Alice-conditioned primal
witness supplies an affine lower bound
$\omega_2^{\rm os}(F_\alpha)\geq g_i+\alpha h_i$. Where the affine
expression is positive, strict separation is certified by
\begin{equation}
 (g_i+\alpha h_i)^2-(8+2\alpha^2)>0.
 \label{eq:fan}
\end{equation}
Ten witnesses, listed in the SM, give overlapping open intervals.
Exact root isolation and interval
comparison establish their union $(\alpha_-,\alpha_+)$, with
$\alpha_+\approx1.998012$ (SM). Theorem~\ref{thm:unbounded}'s level-two
witness has $r_2=1/34$ and is strictly separating throughout
$(254/161,2)$, because $t_2=68/161$. These intervals overlap, proving
Eq.~\eqref{eq:nonclosure}. The lower coverage boundary is not asserted
to be the true boundary of exactness.

\subsection{Optimal-face interval certificates}
Direct rationalization of a nearly optimal Gram matrix can fail because
an exact optimal certificate lies on the boundary of the PSD cone.
Optimal-strategy relations identify that boundary, as in earlier SOS
constructions \cite{BampsPironio2015}. We use them to build and verify a
continuous certificate family. All matrix data below are real: taking real parts preserves
PSD and the real affine constraints, so transpose and adjoint coincide.
Let $\Phi^q_{a|x}$ be its Alice-conditioned moment matrices. Any
dual certificate attaining $q$ obeys
\begin{equation}
 0=\sum_{a,x}\operatorname{tr}(S_{a|x}\Phi^q_{a|x}),
 \qquad S_{a|x}\Phi^q_{a|x}=0.
 \label{eq:complement}
\end{equation}
The second equality follows from positivity of both factors in every
summand. If $K_{a|x}$ spans $\ker\Phi^q_{a|x}$, every such Gram block has
the form
\begin{equation}
 S_{a|x}=K_{a|x}H_{a|x}K_{a|x}^{T},\qquad H_{a|x}\succeq0.
 \label{eq:face}
\end{equation}
Conversely, these blocks give an optimal certificate whenever the
original affine identity is satisfied. This is a complete feasibility
criterion at a specified optimum, not a restriction to a selected
numerical Gram decomposition (Appendix~\ref{app:certificates}).

The tilted-CHSH strategy has rank-two conditional moment matrices. At
Alice-conditioned level two, Eq.~\eqref{eq:face} replaces four $5\times5$ Gram
matrices by four $3\times3$ matrices; at level three it replaces four
$7\times7$ matrices by four $5\times5$ matrices. Exact affine elimination
and rationalization of free coordinates produce the point certificates
in the SM. Their original identities and PSD are independently checked.

To certify a whole interval, rationalize the quantum-bound curve:
\begin{equation}
 \alpha(u)=\frac{-2(u^2-8u+8)}{u^2-8},\qquad
 q(u)=\frac{4(u^2-4u+8)}{u^2-8}.
 \label{eq:parametric}
\end{equation}
On $u\in[43/10,447/100]$, its image contains $[13/10,3/2]$.
The strategy kernels and affine solution space become rational in $u$.
Choosing quadratic polynomials for the 36 free coordinates gives an
identity at $q(u)$, checked coefficient-wise in the rational-function
field $\mathbb Q(u)$.
After multiplication by a positive common denominator, each reduced
Gram matrix is a degree-20 matrix polynomial. All 84 matrices in their
Bernstein expansions are exactly PSD. The nonnegative Bernstein weights
therefore prove positivity throughout the interval (SM). This establishes
Eq.~\eqref{eq:closure3}. The full level-two exactness region and uniqueness
of the candidate boundary near $1.28427$ remain open.

The procedure extends to other rationally parameterized certificate
problems: a pole-free affine solution space with strictly feasible reduced
Grams throughout a compact interval admits a finite rational Bernstein
certificate (Proposition~\ref{prop:interval-completeness}, Appendix~\ref{app:certificates}).
This combines polynomial selection for parameter-dependent matrix
inequalities \cite{Bliman2004} with Bernstein positivity
\cite{BoudaoudCarusoRoy2008}. A known attaining strategy alone does not
guarantee these hypotheses. The degree of the parameter polynomials can
increase without changing the native hierarchy level.

\section{Device-independent randomness certification}
\label{sec:randomness}
Let $S=\langle F_0\rangle$ be the CHSH expectation. Given only $S=s$,
the largest probability with which Eve can guess Alice's output for
$x=0$, allowing arbitrary quantum side information, is
\begin{equation}
 G_Q(s)=\frac{1+\sqrt{2-s^2/4}}{2},\qquad 2\leq s\leq2\sqrt2.
 \label{eq:quantum-guess}
\end{equation}
This tight tradeoff is known \cite{MasanesPironioAcin2011}.
Define $G_k^{\rm std}(s)$ and $G_k^{\rm os}(s)$ by replacing each
subnormalized quantum behavior conditioned on Eve's guess with the
corresponding level-$k$ moment cone, as in conic randomness evaluation
\cite{NietoSillerasPironioSilman2014}. These are upper bounds on $G_Q$;
$-\log_2G_k$ is the associated certified conditional min-entropy.

\begin{theorem}[Exact cost of CHSH randomness certification]
\label{thm:randomness}
For every $s\in I_S=[8/q(3/2),8/q(13/10)]$, approximately
$[2.262742,2.371477]$,
\begin{equation}
 G_2^{\rm std}(s)=G_3^{\rm os}(s)=G_Q(s)<G_2^{\rm os}(s).
 \label{eq:randomness-cost}
\end{equation}
The minimum exact levels are therefore two and three. At $s=23/10$,
an exact rational witness gives
\begin{equation}
 G_2^{\rm os}(23/10)-G_Q(23/10)>\frac{3}{5000},
 \label{eq:guess-gap}
\end{equation}
and a deficit exceeding $10^{-3}$ bits in certified min-entropy.
\end{theorem}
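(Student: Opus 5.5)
The plan is to pass from guessing probabilities to tilted-CHSH values through conic duality, so that Theorem~\ref{thm:quantum} and the standard level-two SOS \cite{BampsPironio2015} can be reused directly. For either hierarchy, a feasible point of $G_k(s)$ is a decomposition $\{\Phi^{(e)}\}_{e=0,1}$ into level-$k$ moment cones. The decomposition is subject to two constraints: the total behavior has CHSH value $s$, and the objective is $\sum_e\Phi^{(e)}[a=e|x=0]$.

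\emph{Lower bound.} Every relaxation contains the quantum set. Hence $G_k^{\rm std}(s),G_k^{\rm os}(s)\geq G_Q(s)$ for all $k\geq1$; this is the trivial direction.

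\emph{Upper bound.} Use the identity $[a=e|0]=(I\pm A_0)/2$. For $\alpha>0$ we have
\[
  \pm A_0=\bigl(F^{\pm}_\alpha-\mathrm{CHSH}\bigr)/\alpha ,
\]
where $F^{\pm}_\alpha$ is $F_\alpha$ or its image under the symmetry $A_x\to-A_x$, $B_y\to-B_y$. This symmetry preserves both hierarchies, since it relabels outcomes and hence permutes the blocks $\Phi_{a|x}$. It also preserves $\mathrm{CHSH}$ and the level-$k$ values. Evaluating on each cone element and summing over $e$ gives
\[
  G_k(s)\leq \frac12+\frac{\omega_k(F_\alpha)-s}{2\alpha}
  \qquad\text{for every }\alpha\in(0,2).
\]
Insert $\omega_2^{\rm std}(F_\alpha)=q(\alpha)$, valid for all $\alpha$, and $\omega_3^{\rm os}(F_\alpha)=q(\alpha)$, valid on $[13/10,3/2]$ by Eq.~\eqref{eq:closure3}. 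Then minimize over $\alpha$. Differentiating $(q(\alpha)-s)/\alpha$ shows that the minimizer satisfies $s=8/q(\alpha)$, and the minimum equals $G_Q(s)$; this is a routine check against Eq.~\eqref{eq:quantum-guess}. Each $s\in I_S$ corresponds to a minimizer $\alpha(s)\in[13/10,3/2]$, because $8/q$ is monotone. This is exactly why $I_S$ has the stated endpoints. It yields $G_2^{\rm std}=G_3^{\rm os}=G_Q$ on $I_S$. Geometrically, the line $\ell_\alpha(s')=\tfrac12+(q(\alpha)-s')/(2\alpha)$ is the tangent to $G_Q$ at $s=8/q(\alpha)$.

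\emph{Strict separation at level two.} For $s\in I_S$, fix $\alpha=\alpha(s)$. Since $[13/10,3/2]\subset(\alpha_-,2)$, Eq.~\eqref{eq:nonclosure} supplies a level-two conditioned moment point $\Phi$ whose Alice bias $b$ and CHSH value $c$ satisfy $\alpha b+c>q(\alpha)$. Without loss of generality $b\geq0$, after the flip symmetry. Let Eve always guess $e=0$; this trivial decomposition has guessing value $(1+b)/2$ at CHSH value $c$. The strict inequality says precisely that the point $(c,(1+b)/2)$ lies strictly above $\ell_\alpha$.

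To move this point to abscissa $s$, I plan to use that the feasible region of $(s,G)$ pairs for a conic relaxation is convex: mixtures of decompositions remain decompositions into cones. I will mix $(c,(1+b)/2)$ with a point on the other side of $s$ that lies on or above $\ell_\alpha$. If $c>s$, I take the deterministic local point $(2,1)$; if $c<s$, I take the Tsirelson point $(2\sqrt2,1/2)$. Both lie in every relaxation and on or above $\ell_\alpha$, because $\ell_\alpha$ is tangent to the concave $G_Q$ curve, which passes through both points. The resulting mixture at abscissa $s$ lies strictly above $\ell_\alpha(s)=G_Q(s)$, giving $G_2^{\rm os}(s)>G_Q(s)$. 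The same construction shows that no level below these suffices, so the minimal exact levels are two and three.

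\emph{Quantitative gap.} For Eq.~\eqref{eq:guess-gap}, I will take an exact rational level-two primal decomposition at $s=23/10$ and compare its value with $G_Q(23/10)$ using rational bounds on the square root. The min-entropy statement then follows from $\log_2$ concavity estimates.

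The main obstacle I expect is this explicit rational witness. Its PSD checks must be exact, and the margin $3/5000$ must survive rational rounding. I would first try to optimize the level-two SDP numerically, then project onto the affine constraints with rational data while keeping strict PSD.
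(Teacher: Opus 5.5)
Your lower bound and your tangent upper bound are correct. The upper bound is exactly the easy direction of the paper's contact criterion: you sum the relabeled bounds $S(p^e)\pm\alpha b(p^e)\le q(\alpha)n(p^e)$ over Eve's two branches.

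The strict-separation step, however, rests on a false claim. Because $G_Q$ is concave, its tangent $\ell_\alpha$ at $s_\alpha=8/q(\alpha)$ lies \emph{above} the curve. Hence $(2,1)$ and $(2\sqrt2,1/2)$ lie strictly \emph{below} $\ell_\alpha$ for $0<\alpha<2$. Explicitly, $\ell_\alpha(2)-1=(q-2-\alpha)/(2\alpha)>0$ because $q^2-(2+\alpha)^2=(2-\alpha)^2$, and $\ell_\alpha(2\sqrt2)-\tfrac12=(q-2\sqrt2)/(2\alpha)>0$. Along your mixing segment, the excess over the affine function $\ell_\alpha$ is itself affine. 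It equals $\delta=(\alpha b+c-q)/(2\alpha)>0$ at the witness and is negative at your chosen endpoint. Its sign at abscissa $s$ therefore depends on how $\delta$ compares with these fixed deficits, roughly $0.03$ and $0.21$ at $\alpha=13/10$, and nothing in your argument controls that comparison. Your ``same construction'' for the minimality claims inherits the same flaw.

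The missing idea is that the quantum boundary departs from its tangent only to second order. Instead of an extreme point, mix the witness with a quantum point at abscissa $s'$ on the side of $s$ opposite to $c$. Such a point exists because $s$ is interior to $(2,2\sqrt2)$. Its deficit below $\ell_\alpha$ is $O((s'-s)^2)$, while the weight on the witness needed to land at $s$ is $\lambda=|s'-s|/(|s'-s|+|c-s|)$. The excess at $s$ is then $\lambda\delta-(1-\lambda)O((s'-s)^2)$, which is positive once $s'$ is close enough to $s$. If $c=s$, no mixing is needed.

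The paper packages this as a concavity argument. It first reduces Eve's two branches to a single bias function $h_{\mathcal K}$ via $r=p^++Rp^-$. Contact $h_{\mathcal K}(s_\alpha)=h_Q(s_\alpha)$, together with $h_{\mathcal K}\ge h_Q$ and concavity of $h_{\mathcal K}$, forces every supergradient at $s_\alpha$ to equal $-1/\alpha$. The tangent would then bound the whole relaxed body, contradicting Eq.~\eqref{eq:nonclosure}. The paper's exact witness at $s=23/10$ follows the same pattern: it mixes the tilt-$11/8$ fan witness with a rational quantum strategy at $u=111/25$ lying on the other side of $s_0$, not with an extreme point.

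For minimality, you do not need a separate level-one witness. The inclusions $\mathsf D_1\subseteq\mathsf O_1\subseteq\mathsf O_2$ make the standard level-one behavior body contain the conditioned level-two one, so $G_1^{\rm std}\ge G_2^{\rm os}>G_Q$. Nesting of the conditioned hierarchy handles conditioned level one.
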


\begin{figure}[!tb]
 \includegraphics[width=\columnwidth]{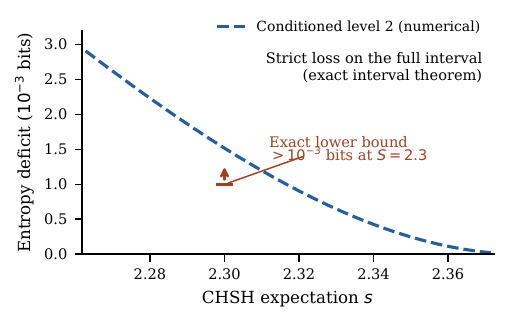}
 \caption{Deficit in the conditional min-entropy certified by
 Alice-conditioned level two: $\log_2[G_2^{\rm os}(s)/G_Q(s)]$.
 Strict positivity throughout the displayed interval follows from
 Theorem~\ref{thm:randomness}; the dashed curve is numerical.
 The arrow at $s=2.3$ marks the exact lower bound $10^{-3}$ bits,
 not the optimum of the relaxation. Standard level two and conditioned
 level three certify the exact quantum tradeoff throughout this interval.}
 \label{fig:randomness}
\end{figure}

The implication is more than substituting a Bell gap into an entropy
formula. Simultaneously reversing all of Alice's and Bob's outcomes
preserves CHSH and reverses Alice's bias. Applying this relabeling to
Eve's negative-guess branch reduces the two-branch optimization to
$G_k=(1+h_k)/2$, where $h_k(s)$ is the maximal Alice bias at CHSH value
$s$. This function is concave and lies above
$h_Q(s)=\sqrt{2-s^2/4}$. At $s=8/q(\alpha)$, contact with the smooth
quantum curve would force its tangent, $s+\alpha h\leq q(\alpha)$,
to support the entire relaxed body. Thus tilted-bound exactness and
randomness-bound exactness are equivalent there (SM).
Theorem~\ref{thm:quantum} gives Eq.~\eqref{eq:randomness-cost}; the
quantitative witness is independently checked on its full moment
blocks. Theorem~\ref{thm:unbounded} and the same contact criterion also give,
for every finite $k$ and $s_k=8/q(\alpha_k)\downarrow2$,
\begin{equation}
 G_k^{\rm os}(s_k)>G_Q(s_k)=G_2^{\rm std}(s_k).
 \label{eq:unbounded-randomness}
\end{equation}
Thus no finite conditioned level certifies the entire optimal randomness
tradeoff; standard level two does. The same separations hold when the
information is a lower bound $S\geq s$. It concerns the precision of this single-round
certification method, without asserting a finite-key rate or a
compiled-protocol security failure.

\section{Discussion}
\label{sec:discussion}

\subsection{Degree required by the nice-SOS soundness route}
\begin{corollary}[An obstruction for exact nice-SOS inputs]
\label{cor:compiled}
Under the stated PVM convention, any nice-SOS certificate of
$q(2-\epsilon)I-F_{2-\epsilon}$ requires Bob degree
$\Omega(\epsilon^{-1/2})$. At $\alpha_k$, even the relaxed target
$q(\alpha_k)+\delta_k$ requires degree greater than $k$ whenever
$0\leq\delta_k\leq8r_k^2(1-2r_k)^2$.
\end{corollary}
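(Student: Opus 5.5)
The plan is to reduce the corollary to Theorem~\ref{thm:unbounded} and weak duality. The only new ingredient is the identification of a nice-SOS certificate with a member of the cone $\Os_k$. Following \cite{CuiFalorNatarajanZhang2025}, a nice-SOS certificate of $\beta I-F$ is an identity $\beta I-F=\sum_{a,x}\sum_i s_{a|x,i}^*s_{a|x,i}$. Here each $s_{a|x,i}$ is Alice's projector $P_{a|x}$ times a polynomial in Bob's operators, and equality holds modulo the PVM relations. The Bob degree is the maximal Bob word length appearing in the $s_{a|x,i}$. First, under the stated PVM convention, I will check that a nice-SOS certificate of Bob degree $k$ is exactly the dual object of Alice-conditioned level $k$. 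Its Gram matrices are indexed by Bob words of length at most $k$ in each block $\Phi_{a|x}$, and pairing with the moment blocks of Eq.~\eqref{eq:os} reproduces $\sum_{a,x}\operatorname{tr}(S_{a|x}\Phi_{a|x})$. Hence the certificate lies in $\Os_k$. This is a bookkeeping step, and the convention caveat in Section~\ref{sec:hierarchies} (no identification of POVM localizer degree with PVM word degree) is exactly why the corollary is stated under the PVM convention.

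With that identification, the first claim follows. If a nice-SOS certificate of $q(2-\epsilon)I-F_{2-\epsilon}$ of Bob degree $k$ existed, then $d_{\rm os}(F_{2-\epsilon},q(2-\epsilon))\le k$. Theorem~\ref{thm:unbounded} gives $d_{\rm os}=\Omega(\epsilon^{-1/2})$, so the required degree is $\Omega(\epsilon^{-1/2})$.

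For the relaxed target, I will use the explicit feasible witness from the proof of Theorem~\ref{thm:unbounded}. Its value at $\alpha_k=2-8r_k$ is $w=4-(1-2r_k)8r_k$, and it satisfies $w^2-q(\alpha_k)^2=64r_k^2(1-2r_k)^2$. I then need $w-q(\alpha_k)>8r_k^2(1-2r_k)^2$. Write $w-q=(w^2-q^2)/(w+q)$. Since $\alpha_k<2$, we have $q(\alpha_k)<\sqrt{16}=4$, and $w<4$ as well, so $w+q<8$. Therefore
\[
 w-q(\alpha_k)>\frac{64r_k^2(1-2r_k)^2}{8}=8r_k^2(1-2r_k)^2\ge\delta_k .
\]
Consequently $\omega_k^{\rm os}(F_{\alpha_k})\ge w>q(\alpha_k)+\delta_k$. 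By weak duality, applied by pairing any $\Os_k$ certificate of $(q+\delta_k)I-F_{\alpha_k}$ with the feasible blocks to obtain $q+\delta_k-w\ge0$, no degree-$\le k$ certificate exists. The strict inequality in the last display is what permits equality $\delta_k=8r_k^2(1-2r_k)^2$.

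The main obstacle is the first step: making sure that the Bob degree in the nice-SOS formalism matches the native level-$k$ word length of the Alice-conditioned hierarchy under the raw PVM quotient. This includes how Alice's projector factors and the outcome-eliminated reduced basis are counted. I will handle it by writing a nice-SOS term in the reduced basis and verifying that its Gram matrix pairs with $\Phi_{a|x}$ exactly as in Appendix~\ref{app:certificates}. Once that is settled, the remaining steps are immediate consequences of Theorem~\ref{thm:unbounded}.
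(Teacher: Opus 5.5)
Your proposal is correct and follows the paper's own argument. A nice-SOS identity with Bob-factor degree $d$ is an $\Os_d$ certificate, so the first claim is Theorem~\ref{thm:unbounded}, and the relaxed target follows by evaluating on the level-$k$ Fej\'er witness using $w-q=(w^2-q^2)/(w+q)>8r_k^2(1-2r_k)^2$ since $w,q<4$. The identification you flag as the main obstacle is treated in the paper as immediate, because $\Os_k$ is defined as the cone of actual single-question SOS certificates with Bob words of length at most $k$.
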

This follows by evaluating the certificate on the witness above;
$w-q=(w^2-q^2)/(w+q)>8r_k^2(1-2r_k)^2$ since $w,q<4$.
It applies to compiled-soundness arguments whose algebraic input is
such a certificate \cite{CuiFalorNatarajanZhang2025}. It does not bound
all soundness methods or the cryptographic security parameter.
Extended tilted expressions with the parties fixed can have different
Bell coefficients \cite{MehtaPaddockWooltorton2025}; the SM records
this distinction.

A separate rational example retains a one-level cost throughout an
explicit $\ell_1$ coefficient neighborhood (SM). Its specified upper
targets need not be the perturbed games' quantum optima.

\subsection{Role of the native filtration}
At level one the order reverses:
\begin{equation}
 \omega_1^{\rm std}(F_\alpha)-\omega_1^{\rm os}(F_\alpha)
 \geq(\sqrt2-1)\alpha>0\quad(0<\alpha<2).
 \label{eq:levelone}
\end{equation}
Conditioned blocks enforce joint-probability positivity and include
selected total-degree-three moments absent from raw standard level one.
The standard value is $2\sqrt2+\alpha$; a CHSH certificate, the
nonsignaling endpoint at tilt two, and convexity give the conditioned
upper bound $2\sqrt2+(2-\sqrt2)\alpha$ (SM). Thus the known
$\D_1\subseteq\Os_1$ conversion is strict. This separates that conversion
from the stronger raw-PVM value-equality claim in Sec.~5.2 of
Ref.~\cite{CuiFalorNatarajanZhang2025}; the SM gives the convention audit
and the almost-quantum comparison \cite{Navascues2015AlmostQuantum}.

At level two, conditioned blocks reach selected total-degree-five
moments, while standard level two stops at total degree four.
Nevertheless the conditioned blocks omit mixed-Alice-question words,
and the exact separating witnesses prove that their larger maximum
moment degree does not suffice. The optimal-face criterion tests all
Gram representations, rather than treating cross-question terms in
one chosen standard certificate as an invariant obstruction.

\section{Conclusions and outlook}
\label{sec:conclusions}
Some games fail to reach their quantum value at any finite standard
NPA level \cite{FanizzaEtAl2025}. Recent results analyze standard NPA
limitations in the simplest Bell scenario
\cite{Pakhunov2026,Pakhunov2026Companion,Chaturvedi2026}, including
families with two marginal tilts \cite{GigenaEtAl2025}.
Here standard level two is already exact throughout the family.
Restricting the structure of each square alone makes the required
conditioned degree unbounded. The reverse conversion always costs
at most one level: each conditioned square $P_{A,a|x}f(B)^\dagger f(B)$
equals $(f(B)P_{A,a|x})^\dagger(f(B)P_{A,a|x})$, of standard degree
at most $k+1$ when $\deg f\leq k$. This includes the eliminated
outcome $P_{A,1|x}=I-P_{A,0|x}$. Hence
\begin{equation}
 \begin{gathered}
 \Os_k\subseteq\D_{k+1}\quad(k\geq1),\\
 \D_1\subsetneq\Os_1\subseteq\D_2\not\subseteq\Os_k\quad(k\geq1).
 \end{gathered}
 \label{eq:cone-summary}
\end{equation}
The uniform reverse overhead of one is sharp already at level one;
no uniform finite overhead works in the other direction.
The interval in Theorem~\ref{thm:quantum} shows that this unbounded
cost coexists with exact finite conversion on a continuous subfamily.

Numerical saturation levels suggest square-root growth on the sampled
tilts (Fig.~\ref{fig:gap}); the new analytic lower bound has that exponent.
An exact $O(\epsilon^{-1/2})$ upper bound, and even finite exact closure
for each fixed subcritical tilt, remain open. Fixed-tolerance saturation
does not settle either question. The Fej\'er construction gives a quantitative
obstruction; the optimal-face and interval methods give exact upper
certificates in a complementary regime. All computational certificates
have a common solver-free verification command (SM).
The randomness consequence concerns single-round certification precision.
Corollary~\ref{cor:compiled} separately identifies the degree required
by the exact nice-SOS route to compiled soundness.

\paragraph*{Data and code availability.}
The verification and figure-generation code is publicly available on
\href{\CodeRepositoryURL}{GitHub} \cite{Wang2026Code}; the certificate
data, numerical records, and figures are archived on
\href{\DataDOIURL}{Zenodo} \cite{Wang2026Data}.
The SM gives the pinned code revision, archive paths, and reproduction commands.

\bibliographystyle{apsrev4-2}
\bibliography{references}

\appendix
\section{Optimal-face reduction and exact certificates}
\label{app:certificates}

We prove the finite-level attainment and optimal-face criteria and give
the attaining strategy. Explicit certificates and verification details
are in the SM.

\subsection{Finite-level attainment}

\begin{proposition}[Attainment at every finite native level]
In the reduced raw-PVM $(2,2,2,2)$ scenario, both normalized moment
bodies are compact and strictly feasible for every finite $k\geq1$.
Each finite optimum has an attaining dual certificate, and both
certificate cones in real Bell-gap space are closed.
\label{prop:attainment}
\end{proposition}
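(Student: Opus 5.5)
The proof follows the standard conic-duality route in three steps: compactness, a strictly feasible point (Slater), and then strong duality with dual attainment, which also gives closedness of the cones.

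\emph{Compactness.} Fix $k$. In the reduced PVM basis every moment is the value of a normalized positive functional on a word $P_1\cdots P_m$ of projectors. For a standard moment matrix, each $2\times2$ principal minor built from $[I,I]=1$, $[w,w]$ and $[w,w']$ gives $|\langle w^\dagger w'\rangle|^2\le\langle w^\dagger w\rangle\langle w'^\dagger w'\rangle$. The constraint $P^2=P$, together with $I-P$ being a tested square, gives diagonal bounds $0\le\langle w^\dagger w\rangle\le1$. Inducting on word length, every diagonal entry lies in $[0,1]$, so every entry is bounded by $1$.

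For the Alice-conditioned blocks the same argument applies to the sum $\sum_a\Phi_{a|x}$, which is independent of $x$ and has $[I,I]$-entry $1$. Each block is dominated by this sum in the PSD order, since the other summand is PSD. So each block has bounded trace and is therefore bounded. The normalized bodies are thus closed, since they are cut out by PSD and affine constraints, and bounded, hence compact.

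\emph{Strict feasibility.} This is the main obstacle: one must exhibit a quantum or commuting-operator state whose truncated moment matrices are positive \emph{definite} on the reduced quotient basis. The plan is to take a direct sum, with positive weights, of finite-dimensional PVM strategies for Alice and Bob, with Bob's part rich enough that all reduced words of length $\le k$ are linearly independent as operators. For instance, use the regular representation of $\mathbb Z_2*\mathbb Z_2$ for Bob's projectors, truncated to a large finite dihedral quotient $D_n$ with $n>2k$, and use the regular representation of Alice's free product similarly. Pair this with a faithful (maximally mixed) state on the tensor product. The Gram matrix of linearly independent operator words under a faithful trace is positive definite, which gives the standard case.

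For the conditioned case, $\Phi_{a|x}$ is the Gram matrix of the Bob words compressed by $P_{A,a|x}$. Faithfulness of the trace and $P_{A,a|x}\ne0$ make each block positive definite, because $\operatorname{tr}(P_A\otimes f(B)^\dagger f(B))=\operatorname{tr}P_A\,\operatorname{tr}(f^\dagger f)>0$ for $f\ne0$. A point satisfying all affine constraints with every block strictly positive definite is a Slater point in the affine subspace. This is exactly where the reduced basis matters: the eliminated outcomes remove the linear dependencies $P_{0}+P_{1}=I$, so that positive definiteness is achievable at all.

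\emph{Duality and closedness.} By conic strong duality, a Slater point for the primal, together with a bounded primal value (from compactness), gives zero duality gap and attainment of the dual optimum. The dual optimum is an SOS certificate $\beta I-\Bell=\sum$ squares plus elements of the ideal. So each finite optimum has an attaining certificate.

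For closedness, let $\beta_n I-\Bell_n\in\D_k$ (resp.\ $\Os_k$) converge to $\beta I-\Bell$. Then $\omega_k(\Bell_n)\le\beta_n$, and $\omega_k$ is continuous in $\Bell$ because it is a maximum of linear functionals over a fixed compact body. Hence $\omega_k(\Bell)\le\beta$, and attainment yields a certificate of $\omega_k(\Bell)I-\Bell$. Adding $(\beta-\omega_k(\Bell))I$, itself the square $I^\dagger I$ times a nonnegative scalar, places $\beta I-\Bell$ in the cone. Equivalently, the cone equals the dual of the moment cone, which is closed.
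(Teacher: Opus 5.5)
Your proposal is correct and follows essentially the same route as the paper. For compactness you iterate the bound $0\le\phi(v^\dagger Pv)\le\phi(v^\dagger v)$ and apply Cauchy--Schwarz; for strict feasibility you use a faithful group trace on dihedral words, which gives equal conditioned blocks $M_{\tau_B}/2$; then Slater's condition gives dual attainment, and continuity of the support function gives closedness. The only difference is that you pass to a finite quotient $D_n$ rather than using the canonical trace on $\mathbb Z_2*\mathbb Z_2$ directly. In both versions, the linear independence of projector words that you assert comes from the triangular change of basis $P=(I+U)/2$ with diagonal $2^{-|w|}$, which the paper states explicitly.
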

\begin{proof}
For a standard moment functional or a block functional $\phi$, and
an allowed word $u=Pv$ of length at most
$k$, positivity applied to $Pv$ and $(I-P)v$ gives
$0\leq\phi(v^\dagger Pv)\leq\phi(v^\dagger v)$.
Iteration bounds every diagonal by $\phi(I)\leq1$, and
Cauchy--Schwarz bounds all entries. The feasible body is closed and
therefore compact.

For the standard body, use the canonical trace $\tau_A\otimes\tau_B$ on
$(\mathbb Z_2*\mathbb Z_2)_A\times(\mathbb Z_2*\mathbb Z_2)_B$.
Distinct reduced involution words $w_Aw_B$ are orthonormal. Under
$P=(I+U)/2$, the change from reduced projector words of total length
at most $k$ to these words is triangular, with diagonal $2^{-|w|}$.
The trace Gram matrix is therefore positive definite, normalized, and
moment-consistent. For the conditioned body, use the Bob trace Gram
matrix $M_{\tau_B}$ and set all four blocks to $M_{\tau_B}/2$.
They are positive definite; their outcome sums agree and have identity
entry one. Slater's condition gives zero duality gap and dual attainment
in both cases. The certificate cones are the epigraphs of the finite,
continuous support functions of the compact moment bodies, hence closed.
\end{proof}

\subsection{Optimal-face criterion and attaining strategy}

\begin{proposition}[Complete optimal-face criterion]
Fix a real feasible quantum strategy attaining $q$ and let $K_{a|x}$ span
the kernels of its Alice-conditioned level-$k$ matrices. Then
$\omega_k^{\rm os}=q$ if and only if the original dual identity at bound
$q$ is feasible with the Gram parameterization in Eq.~\eqref{eq:face}.
\end{proposition}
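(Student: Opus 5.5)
We must prove both directions of the complete optimal-face criterion. Write the Alice-conditioned level-$k$ dual problem in the form fixed by Proposition~\ref{prop:attainment}. A certificate at bound $\beta$ is a tuple of real PSD Gram blocks $S_{a|x}$, indexed by Bob words of length at most $k$, together with multipliers for the affine moment identities in Eq.~\eqref{eq:os}. These must satisfy the original affine identity
\begin{equation*}
 \beta I-F_\alpha=\sum_{a,x}P_{A,a|x}\,\mathbf b^{\dagger}S_{a|x}\mathbf b+(\text{terms in the quotient ideal}),
\end{equation*}
where $\mathbf b$ is the vector of Bob words. By Proposition~\ref{prop:attainment}, strong duality and dual attainment hold, so $\omega_k^{\rm os}=q$ if and only if this identity is solvable with $\beta=q$. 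The proof then reduces to showing two things. First, every solution at $\beta=q$ has blocks of the form in Eq.~\eqref{eq:face}. Second, every tuple of that form which satisfies the identity is a valid certificate.

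\emph{Converse direction (sufficiency).} If $H_{a|x}\succeq0$, then $S_{a|x}=K_{a|x}H_{a|x}K_{a|x}^T$ is PSD by congruence. So any solution of the affine identity with these blocks is a genuine $\Os_k$ certificate of $qI-F_\alpha$. Weak duality gives $\omega_k^{\rm os}\le q$. The reverse inequality $\omega_k^{\rm os}\ge q$ holds because the strategy is feasible: its conditioned moment matrices $\Phi^q_{a|x}$ satisfy Eq.~\eqref{eq:os}, and its value is $q$. Realness of the strategy ensures that $K_{a|x}$ can be taken real, consistent with the real Gram data.

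\emph{Necessity.} Suppose $\omega_k^{\rm os}=q$. By dual attainment there is a certificate $(S_{a|x})$ at bound $q$. Evaluate the identity on the block functionals $\Phi^q_{a|x}$. Each term in the quotient ideal, and each affine consistency multiplier, vanishes because the strategy's moments satisfy every constraint exactly. The left side evaluates to $q-q=0$, which gives the first equality in Eq.~\eqref{eq:complement}. Each summand $\operatorname{tr}(S_{a|x}\Phi^q_{a|x})$ is nonnegative, since it is a trace of a product of two PSD matrices, so each summand vanishes. For PSD $S$ and $\Phi$, $\operatorname{tr}(S\Phi)=\|S^{1/2}\Phi^{1/2}\|_F^2=0$ forces $S\Phi=0$. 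Hence the range of $S_{a|x}$ lies in $\ker\Phi^q_{a|x}=\operatorname{span}K_{a|x}$. Choose $K_{a|x}$ with orthonormal columns and set $H_{a|x}=K_{a|x}^TS_{a|x}K_{a|x}\succeq0$. Then $S_{a|x}=K_{a|x}H_{a|x}K_{a|x}^T$, because $S_{a|x}$ is symmetric and equals its compression onto a subspace containing its range. For a non-orthonormal spanning $K_{a|x}$, use $H_{a|x}=K^{+}S(K^{+})^T$ with the pseudoinverse instead.

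The main obstacle is bookkeeping rather than conceptual. One must confirm that evaluating the full dual identity on the primal point annihilates every non-Gram term. This covers the equality-constraint multipliers for $\sum_a\Phi_{a|x}$ being independent of $x$, the normalization, and the reduction of words by the raw PVM relations. It requires that the dual be written exactly as the Lagrangian of the native primal, so that complementary slackness is precisely $\sum_{a,x}\operatorname{tr}(S_{a|x}\Phi^q_{a|x})=0$. I would verify this first by writing the Lagrangian explicitly and reading off the dual. The face statement then follows as described above.
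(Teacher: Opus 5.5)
Your proposal is correct and follows the paper's proof essentially step for step. Both arguments take dual attainment from Proposition~\ref{prop:attainment}, evaluate on the quantum moment blocks to obtain complementary slackness, use $\operatorname{tr}(S\Phi)=0\Rightarrow S\Phi=0$ for PSD pairs, and close the converse with weak duality plus the attaining strategy. Your explicit choice of $H_{a|x}$, via an orthonormal kernel basis or the pseudoinverse, simply spells out the step the paper summarizes as ``equivalent to Eq.~\eqref{eq:face}.''
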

\begin{proof}
If the SDP value equals $q$, Proposition~\ref{prop:attainment} supplies an
attaining dual. Evaluation on the quantum strategy cancels normalization
and consistency terms and gives Eq.~\eqref{eq:complement}. For two PSD
matrices, zero trace of their product implies their product vanishes;
thus each Gram range lies in the corresponding quantum kernel. This is
equivalent to Eq.~\eqref{eq:face}. Conversely, a feasible reduced dual
proves $\omega_k^{\rm os}\leq q$, and the quantum strategy supplies the
opposite inequality.
\end{proof}

For the tilted family, define
\begin{equation}
 c=\frac{\sqrt{8+2\alpha^2}}4,\quad d=\sqrt{1-c^2},\quad
 z=\frac{\alpha}{2c},\quad t=\sqrt{1-z^2}.
\end{equation}
Use $A_0=Z$, $A_1=X$, $B_0=cZ+dX$, $B_1=cZ-dX$, and the pure state
\begin{equation}
 \rho_\alpha=\frac12
 \begin{pmatrix}1+z&0&0&t\\0&0&0&0\\0&0&0&0\\t&0&0&1-z\end{pmatrix}.
\end{equation}
This pure normalized state attains $q(\alpha)$.
Bob's conditional vectors are proportional to $(1,0)^T$, $(0,1)^T$,
$(1,r)^T$, and $(1,-r)^T$, with $r=(1-z)/t$.
Form $V$ with columns $wv$ over the Bob-word basis; $\Phi^q$ is a
positive multiple of $V^TV$ and $K$ spans its nullspace. The maps have
rank two at all five point checks and throughout the parameter
interval $J$: the SM proves that each first-two-column minor of the
rationally transformed map is nonzero on $J$. The level-three kernel
dimension is therefore five throughout $J$.

The independent verifier checks the original identities, quantum
strategy kernels, and exact $LDL^T$ positivity of the reduced Grams.
For the interval certificate it also excludes poles and checks all
84 Bernstein matrices. These checks, the full-parameter standard SOS,
and the randomness witness run under the common solver-free command
in the SM.

\subsection{Finite interval certificates}

\begin{proposition}[Finite interval certification under strict feasibility]
\label{prop:interval-completeness}
Fix a native hierarchy level and a compact rational interval $J$.
Suppose fixed full-rank rational kernel bases have no poles on $J$ and
the reduced dual identity has a pole-free rational affine
parameterization $z=z_0(u)+Z(u)t$, where $z$ includes the reduced Gram
entries and equality multipliers. If at every $u\in J$ some $t$ makes
all reduced Gram blocks positive definite, then rational polynomial
coordinates $t(u)$ satisfy the identity exactly and admit a positive
common denominator whose Gram numerators have positive-definite
Bernstein coefficients at some finite degree.
\end{proposition}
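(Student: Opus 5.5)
The plan is a compactness-plus-approximation argument in three steps: build a continuous positive-definite selection, approximate it by polynomials, and convert uniform positive definiteness into Bernstein positivity.

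\emph{Step 1: a continuous strictly feasible selection.} Write $G_b(u,t)$ for the reduced Gram blocks. They are affine in $t$, with coefficients rational and pole-free on $J$. For each $u_0\in J$, the hypothesis supplies $t_{u_0}$ with every $G_b(u_0,t_{u_0})\succ0$. By continuity of the minimal eigenvalue, the constant choice $t\equiv t_{u_0}$ keeps all blocks positive definite on a relatively open neighbourhood of $u_0$ in $J$. Compactness of $J$ yields a finite cover by such neighbourhoods. A continuous partition of unity $\{\psi_j\}$ subordinate to the cover then defines $t^\ast(u)=\sum_j\psi_j(u)t_{u_j}$. Because each $G_b(u,\cdot)$ is affine, $G_b(u,t^\ast(u))=\sum_j\psi_j(u)G_b(u,t_{u_j})$. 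This is a convex combination of positive-definite matrices at every $u$, so it is positive definite. Hence $\lambda_{\min}$ of every block along $t^\ast$ is continuous and positive on $J$, and therefore bounded below by some $\mu>0$.

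\emph{Step 2: polynomial approximation.} Apply Weierstrass approximation coordinatewise to $t^\ast$ to obtain rational polynomials $t(u)$. Uniform density of $\mathbb Q[u]$ in $C(J)$ gives $\sup_J|t-t^\ast|<\eta$ for any prescribed $\eta>0$. The blocks depend on $t$ through $Z(u)$, which is bounded on $J$ since it is pole-free. Choosing $\eta$ small enough therefore gives $\lambda_{\min}G_b(u,t(u))\geq\mu/2$ on $J$.

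The affine identity holds exactly for every choice of $t$. This is the defining property of the parameterization $z=z_0+Zt$: it is the general solution of the linear identity over $\mathbb Q(u)$. Consequently the identity is verified coefficientwise in $\mathbb Q(u)$, with no approximation error.

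\emph{Step 3: clearing denominators and Bernstein positivity.} Multiply by the product $D(u)$ of all denominators appearing in $z_0$, $Z$ and the kernel bases. $D$ has no zeros on $J$, so after a sign change it is positive there, and $P_b(u)=D(u)G_b(u,t(u))$ is a matrix polynomial with $P_b\succeq\mu'I$ on $J$ for some $\mu'>0$.

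Map $J$ affinely onto $[0,1]$ and expand $P_b$ in the Bernstein basis of degree $n\geq\deg P_b$. By the standard convergence estimate, each Bernstein coefficient differs from the value of $P_b$ at the corresponding node $i/n$ by $O(1/n)$ in operator norm. The constant depends only on $\deg P_b$ and the coefficient size, following the scalar result of Boudaoud--Caruso--Roy \cite{BoudaoudCarusoRoy2008} applied entrywise. Taking $n$ large enough makes every Bernstein coefficient matrix satisfy $\succeq(\mu'/2)I$, hence positive definite. Since the coefficients are rational, this is a finite exact check, and degree elevation preserves the identity.

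\emph{Main obstacle.} The hardest part is Step 1 combined with the matrix version of the Bernstein estimate in Step 3. The selection works only because $t$ enters affinely, which makes positive definiteness convex in $t$ so that the partition-of-unity average stays positive definite. The Bernstein estimate must be made uniform over matrix entries. I would handle it by bounding $\|B_i-P_b(i/n)\|\leq C/n$ entrywise and converting to operator norm by the block dimension. Kernel bases are fixed and pole-free by hypothesis, so rank changes play no role here.
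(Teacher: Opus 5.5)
Your proposal is correct and follows essentially the same route as the paper: a partition-of-unity selection that stays positive definite because the Gram maps are affine in $t$, then rational polynomial approximation that preserves a uniform margin while the identity stays exact in the free coordinates, then clearing denominators and using uniform convergence of Bernstein coefficients to the nodal values. The differences are cosmetic. The paper clears poles with a product of squared denominators, which is automatically positive, where you flip the sign of a nonvanishing product. The paper also derives the $O(1/n)$ Bernstein estimate from the explicit degree-elevation formula, whereas you cite the scalar estimate entrywise.
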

\begin{proof}
A strictly feasible coordinate vector at one parameter remains feasible
nearby. Compactness and a partition of unity combine finitely many such
vectors into a continuous selection with a uniform positive margin.
Approximate that selection uniformly by rational-coefficient polynomials;
boundedness of the affine maps preserves the margin. The identity stays
exact because only free coordinates were approximated. A product of
squared denominators clears all poles and is positive on $J$.
After rescaling $J$ to $[0,1]$, the degree-$n$ Bernstein coefficients
of each polynomial numerator $N$ converge uniformly to $N(j/n)$.
Uniform positive definiteness therefore makes all coefficients positive
definite for sufficiently large finite $n$; the SM gives the coefficient
formula. This is an existence result, not a guarantee for a fixed-degree
search.
\end{proof}

\end{document}

% --- supplement: supplementary.tex ---

\maketitle

\begin{abstract}
This supplement gives the exact certificate construction, convention
audit, proofs of the auxiliary level-one and robust-family results, and
reproduction instructions for the accompanying Letter. We give a complete
analytic proof that no finite conditioned level contains all standard
level-two Bell certificates, with a square-root degree lower bound
from Fej\'er-weighted truncated positive functionals. This also obstructs uniform
finite-level certification of the CHSH randomness tradeoff. The
optimal-face certificates prove exact Alice-conditioned level-two closure at
$\alpha=1,5/4,41/32$. A rational-function certificate proves level-three
closure throughout $[13/10,3/2]$, giving exactly one extra level for every
quantum bound in this interval.
We also prove the resulting strict cost in single-round CHSH randomness
certification against quantum side information and provide an exact
rational witness for a quantitative min-entropy certification deficit.
Earlier positive-tolerance certificates remain available as historical
checks; the zero-error statements use the new algebraic certificates.
\end{abstract}

\section{Scope and certificate provenance}
\label{sec:intro}
The Letter leads with an unbounded exact quantum-bound conversion cost
for tilted CHSH, proved in \cref{sec:unbounded}. Here we retain the detailed convention audit and the independently
useful auxiliary results. All level comparisons use the raw reduced PVM
convention specified below. Numerical searches are distinguished from
exactly verified certificates throughout.

\paragraph{Public code and data.}
The code is available in the \href{\CodeRepositoryURL}{GitHub repository}
\cite{Wang2026Code}, and the data in the
\href{\DataDOIURL}{Zenodo archive} \cite{Wang2026Data}.
Throughout this supplement, linked script paths open the cited code
revision; linked \nolinkurl{artifacts/} paths open the data record and
identify files inside its downloadable ZIP.
Download and setup instructions are in \cref{sec:repro-commands}.

\section{Preliminaries}
\label{sec:prelim}

\subsection{Scenario and the PVM quotient algebra}

A $(2,2,2,2)$ Bell scenario has two parties, two questions each, two
outcomes each.  Work in the free $\ast$-algebra with generators
$A_{a|x}$, $B_{b|y}$ modulo the projective-measurement relations
$A_{a|x}^2=A_{a|x}$, $A_{a|x}A_{a'|x}=0$ for $a\neq a'$,
$\sum_a A_{a|x}=I$, the same for Bob, and $[A_{a|x},B_{b|y}]=0$.  In
computations we eliminate one outcome per measurement and keep the
remaining projectors as generators; \cref{sec:full} audits this choice with
an outcome-complete implementation.  The \emph{native level} of the
standard hierarchy is the total reduced word length; the native level of
the Alice-conditioned hierarchy is the reduced \emph{Bob} word length, with Alice's
$(x,a)$ carried as a block label.  These are different filtrations and we
never identify them without an explicit level-preserving map.
We call the block hierarchy \emph{Alice-conditioned}; it is the
``one-sided'' hierarchy of \cite{CuiFalorNatarajanZhang2025}, and retains
the notation $\omega_k^{\rm os}$, $\mathsf O_k$, and $d_{\rm os}$.
Here the qualifier describes certificate structure. A singly tilted
CHSH functional instead has only one marginal tilt; these are different
uses of ``one-sided.''

\subsection{The two hierarchies}

The standard level-$k$ relaxation maximizes $\langle \Bell_G,\Gamma\rangle$
over PSD moment matrices $\Gamma$ indexed by words of length $\le k$, with
normalization $\Gamma_{I,I}=1$ and moment consistency
$\Gamma_{s_1,t_1}=\Gamma_{s_2,t_2}$ whenever $s_1^\dagger t_1=s_2^\dagger
t_2$ in the quotient algebra \cite[Eq.~(2.7)]{CuiFalorNatarajanZhang2025}.
The Alice-conditioned level-$k$ relaxation \cite[Definition~4.1]{CuiFalorNatarajanZhang2025}
maximizes $\sum_{a,b,x,y}c_{abxy}\,\phi^k_{a|x}(B_{b|y})$ over PSD blocks
$\Gamma^k_{a|x}$ indexed by Bob words of length $\le k$, with the
consistency condition $\sum_a\Gamma^k_{a|x}$ independent of $x$ and
normalization $\sum_a\Gamma^k_{a|0}(I,I)=1$.  Its dual is the degree-$k$
\emph{nice} SOS cone: each square involves projectors of only one Alice
question \cite[Definition~3.1]{CuiFalorNatarajanZhang2025}.  The dual of
the sequential hierarchy of \cite{KlepEtAl2025} is the \emph{sparse} SOS
cone; on projectivized data the block presentations coincide, and we keep
the POVM/localizer and PVM-quotient conventions distinct throughout.

In real Bell-gap space, let $\mathsf D_k$ and $\mathsf O_k$ be the cones
of actual standard and Alice-conditioned degree-$k$ SOS certificates,
respectively. In this reduced binary scenario both coincide with their
closed Bell-gap duals by \cref{lem:attainment}. Thus
$\omega_k^{\rm std}(G)=\min\{\eta:\eta I-\Bell_G\in\mathsf D_k\}$,
and likewise for $\mathsf O_k$ at every finite native level.

\begin{definition}[Conversion degree at a target bound]
\label{def:convdeg}
For a Bell functional $G$ and a target bound $\beta$, the conversion
degrees are
\begin{equation*}
d_{\rm std}(G,\beta)=\min\{k:\beta I-\Bell_G\in\mathsf D_k\},
\qquad
d_{\rm os}(G,\beta)=\min\{k:\beta I-\Bell_G\in\mathsf O_k\},
\end{equation*}
with $\min\varnothing=\infty$. We write $d_{\rm os}(G)$ for
$d_{\rm os}(G,\omega_q(G))$ when the target is the quantum value, and
likewise $d_{\rm std}(G)$.  All conversion-degree statements below carry
their target bound explicitly.
\end{definition}

\begin{lemma}[Attainment at every finite native level]
\label{lem:attainment}
For every finite $k\geq1$, both normalized moment bodies in the reduced
raw-PVM $(2,2,2,2)$ scenario are compact and strictly feasible.
Strong duality holds with dual attainment, and the actual certificate
cones $\mathsf D_k$ and $\mathsf O_k$ in real Bell-gap space are closed.
\end{lemma}

\begin{proof}
For either a standard moment functional or a block functional $m$, write
an allowed word as $u=Pv$, $|u|\leq k$. Both $Pv$ and $(I-P)v$ belong to
the truncated word span. Their positivity gives
\begin{equation*}
0\leq m(u^\dagger u)=m(v^\dagger Pv)\leq m(v^\dagger v).
\end{equation*}
Iteration bounds all diagonals by $m(I)$, which is one for the standard
body and at most one for a normalized block. Cauchy--Schwarz bounds the
off-diagonal entries. Both feasible bodies are closed, hence compact.

For standard strict feasibility use the canonical group trace
$\tau=\tau_A\otimes\tau_B$ on
\begin{equation*}
\mathcal G=(\mathbb Z_2*\mathbb Z_2)_A\times
           (\mathbb Z_2*\mathbb Z_2)_B.
\end{equation*}
Reduced group words $w_Aw_B$ are orthonormal for
$\langle u,v\rangle=\tau(u^\dagger v)$. Replace each retained projector
by $P=(I+U)/2$. Ordered by total length, this change of basis is
triangular with diagonal $2^{-|w|}$: the leading involution word has the
same reduced letters, and all remaining terms have shorter length.
The projector-word trace Gram matrix is therefore positive definite for
every finite $k$, normalized, and consistent with the PVM relations.
For the Alice-conditioned body use the analogous Bob trace Gram matrix
$M_{\tau_B}$ and put $\Phi_{a|x}=M_{\tau_B}/2$ in each of the four blocks.
Their outcome sums coincide and have identity entry one, so this is also
strictly feasible. This argument uses the independent, reduced basis;
unreduced outcome-complete matrices have linear dependencies.

Slater's condition gives zero duality gap and dual attainment in both
cases. Adding a nonnegative multiple of $I$ preserves each SOS cone, so
$\mathsf C_k=\{\eta I-\Bell_G:\eta\geq\omega_k^{\mathsf C}(G)\}$
for $\mathsf C\in\{\mathsf D,\mathsf O\}$. Each support function is finite
and continuous on the finite-dimensional Bell coefficient space; its
epigraph, and hence the corresponding certificate cone, is closed.
\end{proof}

\paragraph{Explicit level-one checks.}
At level one the standard trace Gram matrix also equals the moment matrix
of the uniform mixture of the $16$ deterministic strategies. In the
retained projector basis $(I,P_{A,0},P_{A,1},P_{B,0},P_{B,1})$, it is
\begin{equation*}
\Gamma_0=\frac{1}{16}\sum_{v\in\{0,1\}^4}\binom{1}{v}\binom{1}{v}^{\!\top}
=\frac14\begin{pmatrix}
4&2&2&2&2\\
2&2&1&1&1\\
2&1&2&1&1\\
2&1&1&2&1\\
2&1&1&1&2
\end{pmatrix},
\end{equation*}
whose Schur complement of the top-left entry is $I_4/4$, so
$\det\Gamma_0=1/256>0$: strictly positive definite, normalized, and
moment-consistent.  For the Alice-conditioned level-one body take all four blocks
equal, in the Bob basis $(I,P_{B,0},P_{B,1})$:
\begin{equation*}
\Phi_{a|x}=\frac18\begin{pmatrix}4&2&2\\2&2&1\\2&1&2\end{pmatrix}
=\frac18\sum_{e\in\{0,1\}^2}u_eu_e^\top,\qquad u_e=(1,e_0,e_1),
\end{equation*}
whose Schur complement of the top-left entry is $I_2/8$, so
$\det\Phi_{a|x}=1/128>0$; the outcome sum has top-left entry $1$ and is
independent of $x$, so consistency and normalization hold.  Both matrices
are checked exactly (determinants and Schur complements) in
\codepath{scripts/verify_certificates.py}; the trace construction above
proves the all-level result.

\begin{remark}[Values versus cone membership]
\label{rem:cone-values}
By \cref{lem:attainment}, at every finite native level and for either
hierarchy, $\omega_k^{\mathsf C}(G)\leq\beta$ if and only if
$\beta I-\Bell_G\in\mathsf C_k$, with $\mathsf C\in\{\mathsf D,\mathsf O\}$.
Thus both membership statements mean actual SOS certificates under the
same reduced raw-PVM convention. Our quantitative separations additionally
supply explicit primal witnesses and dual certificates.
\end{remark}

\begin{table}[t]
\centering
\caption{Definitions audit: the conventions on both sides of the level-one
comparison of Corollary~\ref{cor:cfnz} and Remark~\ref{rem:52}, and why they coincide.  Every row
matches our implementation verbatim (\cref{sec:repro}); the only
genuinely distinct convention is the POVM/localizer filtration of
\cite{KlepEtAl2025}, which we never identify with the PVM quotient without
an explicit level-preserving map.}
\label{tab:audit}
\small
\begin{tabular}{>{\raggedright\arraybackslash}p{2.5cm}>{\raggedright\arraybackslash}p{3.4cm}>{\raggedright\arraybackslash}p{3.4cm}>{\raggedright\arraybackslash}p{3.4cm}}
\toprule
object & CFNZ v1 \cite{CuiFalorNatarajanZhang2025} & KPRSTXZ v2 \cite{KlepEtAl2025} & this paper \\
\midrule
standard moment matrix & $\Gamma^d$, Eq.~(2.7); basis $\mathbf b^d$ of PVM monomials of total degree $\le d$ & $\Gamma^{(n)}$, Eq.~(23); letters $\{A_{a|x},B_{b|y}\}$ of length $\le n$ & standard level $k$: total reduced word length \\
Alice-conditioned blocks & $\phi^d_{ax}$, Def.~4.1; Bob words of degree $\le d$ & $\Theta^{(n)}(a|x)$, Eq.~(22); Bob letters of length $\le n$ & Alice-conditioned level $k$: Bob reduced word length \\
consistency & $\sum_a\phi_{ax}$ independent of $x$ & strong no-signaling, $\sum_a\Theta(a|x)=\Theta(a|x')$ & identical constraint \\
normalization & $\sum_a\phi_{a0}(I)=1$ & $\Theta_{1,1}=1$ & identical constraint \\
outcomes & PVM; one outcome per measurement removed in the proof of Thm.~5.3 & POVM via localizers; projectivized for duality (Sec.~3.3) & reduced PVM quotient; outcome-complete audit, \cref{sec:full} \\
dual cone & nice SOS (Def.~3.1) & sparse SOS (Eq.~(26)) & $\mathsf O_k$ via the block dual; projectivized data coincide \\
probability positivity & no separate inequalities & POVM bounds via localizers & implied at standard $k\geq2$; see below \\
\bottomrule
\end{tabular}
\end{table}

\paragraph{Joint-probability positivity at standard level two.}
For every pair of outcomes, $R=A_{a|x}B_{b|y}$ belongs to the standard
level-two word span, including after outcome elimination. Commutation and
idempotency give $R^\dagger R=R$, so moment positivity implies
$p(ab|xy)=m(R)=m(R^\dagger R)\geq0$. Consequently adding explicit joint
probability inequalities leaves the raw standard level-$k$ body unchanged
for every $k\geq2$. This resolves the probability-positivity convention
for our level-two comparison without a separate numerical variant.

\begin{remark}
\label{rem:audit}
Two consequences of \cref{tab:audit}.  First, our Alice-conditioned hierarchy is
literally CFNZ Definition~4.1 over the reduced PVM quotient (the outcome
removal is the same one used in the proof of their Theorem~5.3), so the
comparison with their Section~5.2 in \cref{cor:cfnz} is made at their own
level convention.  Second, a caution concerning
\cite[Section~3]{KlepEtAl2025}: their sequential level-one blocks also
carry the degree-three moments of \cref{rem:52}, so a standard-to-Alice-conditioned
restriction at level one cannot be read off verbatim in their system
either.  We do \emph{not} claim a correction under their POVM/localizer
convention: the level mapping between the localizer filtration and the raw
PVM quotient must be established separately, and the POVM bounds added via
localizers could in principle repair the restriction step; the phrasing
should be re-verified directly in their constraint system.
\end{remark}

\begin{proposition}[{Known input: \cite[Theorem~5.3]{CuiFalorNatarajanZhang2025}}]
\label{prop:d1}
$\mathsf D_1\subseteq\mathsf O_1$: every standard degree-one SOS
certificate can be converted into a nice degree-one certificate.
\end{proposition}

\begin{remark}[The failed strengthening]
\label{rem:52}
\cite[Section~5.2]{CuiFalorNatarajanZhang2025} states that the two
level-one primal values are equal.  Its proof opens with the sentence
\begin{quote}
``This restricts to a feasible solution for Equation~(4.4) with the same
value.''
\end{quote}
--- the claim that a standard level-one feasible point restricts to an
Alice-conditioned level-one feasible point.  That sentence is unproved, and under
the convention exactly as stated in v1 (Table~\ref{tab:audit}) it is
false: an Alice-conditioned level-one block contains the quadratic Bob moments
$\phi_{a|x}(B_{b|y}^\dagger B_{b'|y'})$ --- total degree three once the
Alice label is counted --- which the standard level-one moment body (total
degree at most two) does not contain.  There is nothing to restrict.  The
remainder of their Section~5.2 proves the \emph{opposite} direction
(Alice-conditioned to standard, by a Gram-vector construction), which survives and
is consistent with all our computations.  \cref{thm:L1-sep} gives an exact
refutation of the equality.
\end{remark}

\begin{figure}[t]
\centering
\includegraphics[width=0.92\textwidth]{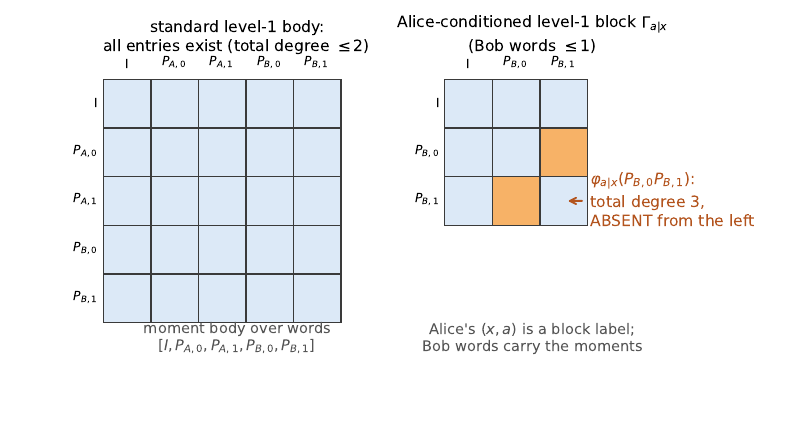}
\caption{The level-one filtration obstruction (\cref{rem:52}).  Left: the
standard level-one moment body contains moments of total degree at most
two.  Right: an Alice-conditioned level-one block $\Gamma_{a|x}$, indexed by Bob
words of length at most one, contains the corner moments
$\varphi_{a|x}(P_{B,0}P_{B,1})$ --- total degree three once the Alice
label is counted --- which the standard level-one body cannot supply.}
\label{fig:filtration}
\end{figure}

\section{Level one: no equality --- the tilted family separates exactly}
\label{sec:L1}

Write the tilted-CHSH functional as
\begin{equation*}
F_\alpha=\alpha\,\bigl(P_{A_{0|0}}-P_{A_{1|0}}\bigr)
 +\sum_{x,y}c_{xy}\,E_{xy},\qquad
(c_{00},c_{01},c_{10},c_{11})=(1,1,1,-1),
\end{equation*}
with $E_{xy}=(2P_{A_{0|x}}-I)(2P_{B_{0|y}}-I)$; its quantum value is
$\sqrt{8+2\alpha^2}$ \cite{Acin2012Tilted,BampsPironio2015}.  Throughout we
write $P_{A,x}:=P_{A_{0|x}}$ for the retained (outcome-$0$) projector of
Alice's question $x$, and $P_{B,y}$ likewise for Bob; the eliminated
projector is $I-P_{A,x}$.

\begin{lemma}[Alice-conditioned level-one behaviors are nonsignaling]
\label{lem:os1-ns}
Let $\{\varphi_{a|x}\}$ be feasible for the Alice-conditioned level-one problem
(Definition~4.1 of \cite{CuiFalorNatarajanZhang2025}).  Then
$p(ab|xy):=\varphi_{a|x}(B_{b|y})$ is a nonsignaling behavior.
\end{lemma}

\begin{proof}
Positivity: $B_{b|y}=B_{b|y}^\dagger B_{b|y}$, so
$p(ab|xy)=\varphi_{a|x}(B_{b|y}^\dagger B_{b|y})\geq0$ since
$\varphi_{a|x}\succeq0$.  Bob nonsignaling: consistency makes
$\sum_a\varphi_{a|x}$ independent of $x$, so $\sum_a p(ab|xy)=p(b|y)$.
Alice nonsignaling: $\sum_b p(ab|xy)=\varphi_{a|x}(\sum_b B_{b|y})
=\varphi_{a|x}(I)$, independent of $y$.  Normalization: consistency plus
$\sum_a\varphi_{a|0}(I)=1$ gives $\sum_a p(a|x)=1$ for every $x$.
\end{proof}

\begin{corollary}[Value bound through the nonsignaling polytope]
\label{cor:ns-bound}
For every Bell functional $G$, $\omega_1^{\rm os}(G)\leq\omega_{\rm ns}(G)$.
\end{corollary}

\begin{proposition}[The sandwich]
\label{prop:sandwich}
$\widetilde{\mathcal Q}^{\rm beh}\subseteq\mathsf O_1^{\rm beh}\subseteq
\mathrm{NS}$, and both inclusions are strict.
\end{proposition}

\begin{proof}
The right inclusion is \cref{lem:os1-ns}.  For the left inclusion, take a
$1{+}AB$ (almost-quantum) moment matrix \cite{Navascues2015AlmostQuantum}
with all outcomes present. For \emph{each} $a\in\{0,1\}$, its principal
submatrix on $\{A_{a|x},A_{a|x}P_{B,0},A_{a|x}P_{B,1}\}$ is PSD.
Its $(u,v)$ entry is $m(A_{a|x}u^\dagger v)$, using idempotency and
commutation. In the reduced basis the same construction is a congruence
$T_{a|x}\Gamma T_{a|x}^T$, including $A_{1|x}=I-P_{A,x}$; PSD of that
block does not follow merely by subtracting the $a=0$ block from the
outcome sum. Summing the two constructed blocks gives
$m(u^\dagger v)$, so consistency and normalization hold. The four maps
are checked symbolically in \codepath{scripts/standard_tilted_sos.py}.
Right
strictness: a PR box is a nonsignaling behavior with CHSH value $4$, while
every Alice-conditioned level-one behavior has CHSH value at most
$\omega_1^{\rm os}(F_0)=2\sqrt2<4$ (the $\alpha=0$ closure of
\cref{thm:L1-sep}); hence the PR box lies in $\mathrm{NS}\setminus\mathsf
O_1^{\rm beh}$.  Left
strictness is exact, by an explicit rational Alice-conditioned level-one primal
witness at $\alpha=1/4$ with objective value above the almost-quantum value
$q=\sqrt{8+2\alpha^2}$ (proved in \cref{lem:standard-tilted}; the witness is verified without solvers;
\datapath{artifacts/m65/sandwich.json}).  The numerical fit of the inset of
Figure~\ref{fig:L1} ($\omega_1^{\rm os}\approx2\sqrt2+\alpha^2/\sqrt2$
versus $q\approx2\sqrt2+\frac{\sqrt2}{4}\alpha^2$) displays the same
strictness continuously.
\end{proof}

\begin{theorem}[Standard level-one value, exact]
\label{thm:L1-std}
For all $0\leq\alpha<2$,
\begin{equation*}
\omega_1^{\rm std}(F_\alpha)=2\sqrt2+\alpha .
\end{equation*}
\end{theorem}

\begin{proof}[Proof (certificate; verified without solvers)]
Lower bound. Use the identity followed by Alice's two retained projectors
and Bob's two retained projectors as the word order. The
$\alpha$-independent moment matrix
\begin{equation*}
\Gamma=\begin{pmatrix}
1 & 1 & \tfrac12 & s & s\\
1 & 1 & \tfrac12 & s & s\\
\tfrac12 & \tfrac12 & \tfrac12 & u & \tfrac14\\
s & s & u & s & u\\
s & s & \tfrac14 & u & s
\end{pmatrix},
\qquad s=\frac{2+\sqrt2}{4},\quad u=\frac{1+\sqrt2}{4},
\end{equation*}
has exact eigenvalues $0$ (multiplicity $3$) and
$\frac{7+\sqrt2\pm\sqrt{28+10\sqrt2}}{4}>0$, satisfies moment consistency
and normalization, and attains objective $2\sqrt2+\alpha$ for every
$\alpha$; the exact verification is performed entry-wise over
$\QQ(\sqrt2)$ by the script
\codepath{scripts/run_level1_family.py}.

Upper bound.  The degree-one SOS identity
\begin{align*}
\bigl(2\sqrt2+\alpha\bigr)I-F_\alpha
={}&\tfrac{1}{\sqrt2}\Bigl(A_0-\tfrac{B_0+B_1}{\sqrt2}\Bigr)^2
 +\tfrac{1}{\sqrt2}\Bigl(A_1-\tfrac{B_0-B_1}{\sqrt2}\Bigr)^2
 +2\alpha\,(I-P_{A_{0|0}})^2,
\end{align*}
with $A_x=2P_{A_{0|x}}-I$ and $B_y=2P_{B_{0|y}}-I$, holds word-by-word in
the quotient algebra with $\alpha$ symbolic; all square coefficients are
nonnegative for $\alpha\geq0$.  The identity is expanded and checked
coefficient-wise by \texttt{npa2.exact.verify\_sos\_identity}.
\end{proof}

\begin{lemma}[Nonsignaling ceiling of the tilted family]
\label{lem:ns-ceiling}
For $0\leq\alpha<2$, $\omega_{\rm ns}(F_\alpha)=4$.
\end{lemma}

\begin{proof}
The nonsignaling maximum of $F_\alpha$ is attained at a vertex of the
polytope: the PR boxes give $4$ (all marginals zero), and every local
deterministic vertex gives at most $2+\alpha<4$ for $\alpha<2$.
Note that the standard level-one body is \emph{not} nonsignaling: the
witness of \cref{thm:L1-std} has $p(1,1|1,1)=(1-\sqrt2)/4<0$.
\end{proof}

\begin{remark}
\label{rem:chsh-nice}
Each square in the level-one CHSH part uses only one Alice question: the
standard degree-one CHSH certificate is already nice.  This is why
$\alpha=0$ is the equality point of \cref{thm:L1-sep} below; the tilt term
is where the two cones diverge.
\end{remark}

\begin{theorem}[Strict level-one separation on the whole family, with an
explicit linear margin]
\label{thm:L1-sep}
For all $0<\alpha<2$,
\begin{equation*}
\omega_1^{\rm std}(F_\alpha)-\omega_1^{\rm os}(F_\alpha)
\ \geq\ (\sqrt2-1)\,\alpha\ >\ 0,
\end{equation*}
equivalently $\omega_1^{\rm os}(F_\alpha)\leq
2\sqrt2+(2-\sqrt2)\alpha<2\sqrt2+\alpha=\omega_1^{\rm std}(F_\alpha)$.
\end{theorem}

\begin{proof}[Proof by endpoint convexity]
Write $f(\alpha):=\omega_1^{\rm os}(F_\alpha)$.  The Alice-conditioned level-one
feasible body does not depend on $\alpha$ --- the tilt enters only the
objective, which is affine in $\alpha$ --- so $f$ is the support function
of a fixed body composed with an affine map, hence convex.

Two endpoint facts are certified exactly.  Left: $f(0)=2\sqrt2$.  The
degree-one CHSH certificate of \cref{thm:L1-std} is already nice
(\cref{rem:chsh-nice}), giving the upper bound, and the quantum value
$2\sqrt2$ gives the lower bound since the relaxation contains the quantum
set.  Right: $f(\alpha)\leq\omega_{\rm ns}(F_\alpha)=4$ for every
$\alpha<2$ (\cref{cor:ns-bound,lem:ns-ceiling}).

For any $\beta\in(\alpha,2)$, convexity across $0<\alpha<\beta$ gives
\begin{equation*}
f(\alpha)\ \leq\ \Bigl(1-\frac{\alpha}{\beta}\Bigr)\,f(0)
+\frac{\alpha}{\beta}\,f(\beta)
\ \leq\ \Bigl(1-\frac{\alpha}{\beta}\Bigr)\,2\sqrt2
+\frac{\alpha}{\beta}\cdot4,
\end{equation*}
and $\beta\to2^{-}$ yields $f(\alpha)\leq2\sqrt2+(2-\sqrt2)\alpha$.
Subtracting from $\omega_1^{\rm std}(F_\alpha)=2\sqrt2+\alpha$
(\cref{thm:L1-std}) leaves the margin
$\alpha-(2-\sqrt2)\alpha=(\sqrt2-1)\alpha>0$.
\end{proof}

\begin{remark}[Exact anchor certificates: cross-checked upper bounds]
\label{rem:anchors}
The chord $2\sqrt2+(2-\sqrt2)\alpha$ is certified but not tight:
numerically, the Alice-conditioned value leaves $2\sqrt2$ with vanishing right
derivative (fit $2\sqrt2+\alpha^2/\sqrt2$, a numerical identification;
\cref{sec:L1-form}).  Certified upper bounds come from exact rational dual
certificates: the affine
identity
\begin{equation*}
t-F(m)=\sum_{x,a}\operatorname{Tr}\!\bigl(S_{x,a}\Gamma_{a|x}(m)\bigr)
 +\operatorname{Tr}\!\bigl(\Lambda\,{\textstyle\sum_a}
 (\Gamma_{a|1}-\Gamma_{a|0})\bigr)
 +\nu\bigl({\textstyle\sum_a} m_{0,a}(I)-1\bigr)
\end{equation*}
holds coefficient-wise in the free block moments $m$, with each $S_{x,a}$
exactly PSD (rational entries, exact $LDL^\ast$ test).  At ten rational
anchors $\alpha_i\in\{1/16,1/8,1/4,3/8,1/2,3/4,1,5/4,3/2,7/4\}$ these give
$\omega_1^{\rm os}(F_{\alpha_i})\leq U_i$, with $U_i<2\sqrt2+\alpha_i$
compared exactly over $\QQ(\sqrt2)$ (for rational $r>0$, $r<2\sqrt2$ iff
$r^2<8$).  The anchors are certified upper bounds consistent with the
numerical curve (Figure~\ref{fig:L1}) and solver-independent cross-checks
of the SDP implementation; the separation of \cref{thm:L1-sep} itself uses
only the two endpoints.  Every step is exact rational arithmetic in
\codepath{scripts/run_level1_anchors.py}.
\end{remark}

\begin{corollary}[Strictness of the conversion; failure of the level-one
value-equality claim]
\label{cor:cfnz}
The inclusion $\mathsf D_1\subseteq\mathsf O_1$ of
\cref{prop:d1} is strict, and the level-one value-equality statement of
\cite[Section~5.2]{CuiFalorNatarajanZhang2025} (v1, 2025) fails as stated
under its raw PVM convention (audited row by row in
Table~\ref{tab:audit}); the one-direction conversion of \cref{prop:d1} is
unaffected.  For the sequential POVM/localizer convention of
\cite{KlepEtAl2025} we record a caution only, not a correction
(\cref{rem:audit}).
\end{corollary}

\begin{proof}
By \cref{thm:L1-sep}, $\omega_1^{\rm os}(F_{1/2})<\omega_1^{\rm
std}(F_{1/2})$.  If the standard and Alice-conditioned level-one primal values
coincided for every functional this would be a contradiction.  Strictness:
for any $\beta\in[\,\omega_1^{\rm os}(F_{1/2}),\,\omega_1^{\rm
std}(F_{1/2}))$, the polynomial $\beta I-F_{1/2}$ lies in $\mathsf O_1$ but
not in $\mathsf D_1$.
\end{proof}

\begin{proposition}[Level-one value comparison]
\label{prop:criterion}
For every Bell functional $G$,
\begin{equation*}
\omega_1^{\rm os}(G)\ \leq\ \omega_1^{\rm std}(G).
\end{equation*}
Moreover, if the two values are equal then the common value
$\beta=\omega_1^{\rm std}(G)$ is certified by a nice degree-one SOS
decomposition; and a sufficient condition for equality is that
$\omega_1^{\rm std}(G)=\omega_q(G)$ and that value admits a nice degree-one
certificate (then $\omega_q(G)\leq\omega_1^{\rm os}(G)\leq\omega_1^{\rm
std}(G)=\omega_q(G)$).
\end{proposition}

\begin{proof}
The inequality is a value-level consequence of the certificate inclusion
$\mathsf D_1\subseteq\mathsf O_1$ (\cref{prop:d1}): if
$\beta I-\Bell_G\in\mathsf D_1$ then the converted nice certificate gives
$\beta I-\Bell_G\in\mathsf O_1$, so the Alice-conditioned infimum is no larger.
(The same conclusion is obtained behaviorally by the Gram-vector
construction of \cite[Section~5.2]{CuiFalorNatarajanZhang2025}; we
implemented that construction and confirmed that the Alice-conditioned level-one
optimal behaviors of our controls lift to standard level-one moment
matrices.)  Suppose now that equality holds with common value
$\beta=\omega_1^{\rm std}(G)=\omega_1^{\rm os}(G)$.  By
\cref{lem:attainment} the Alice-conditioned optimum is attained by an explicit
degree-one nice SOS decomposition at $\beta$, not merely approached in the
closure.  The sufficient condition then follows from
$\omega_q(G)\leq\omega_1^{\rm os}(G)$, which holds because the Alice-conditioned
body contains the quantum set.
\end{proof}

\begin{remark}
\label{rem:criterion-scope}
Note what is \emph{not} claimed: the mere existence of \emph{some} nice
certificate at $\beta=\omega_1^{\rm std}(G)$ does \emph{not} imply
equality, since a nicer certificate may sit lower.  CHSH is the equality
case through the sufficient condition (its standard degree-one certificate
is already nice, \cref{rem:chsh-nice}, and its level-one value is the
quantum value); tilted CHSH with $\alpha>0$ is strict by
\cref{thm:L1-sep}.
\end{remark}

\begin{figure}[t]
\centering
\includegraphics[width=0.92\textwidth]{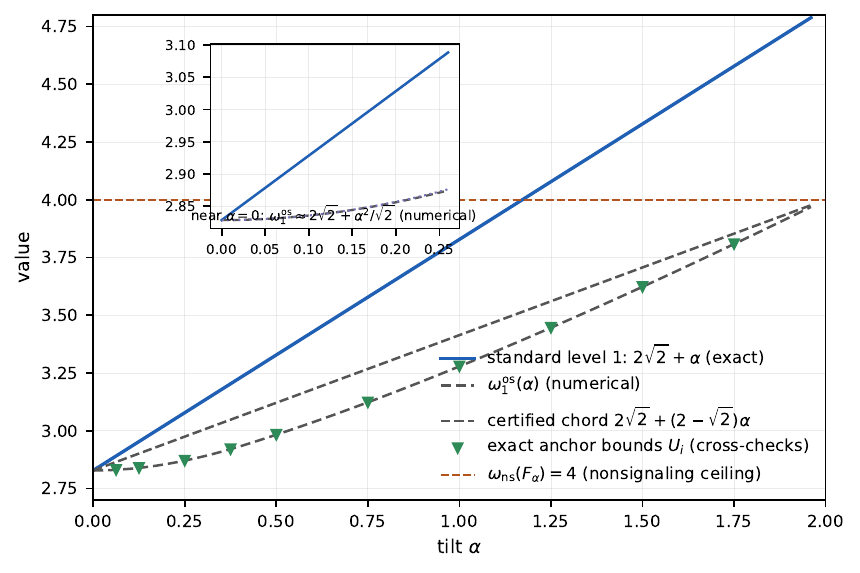}
\caption{Level-one separation of tilted CHSH (\cref{thm:L1-std,thm:L1-sep}).
The standard level-one value is exactly $2\sqrt2+\alpha$
(\cref{thm:L1-std}); the dashed curve is the numerical Alice-conditioned level-one
value.  The gray dashed line is the certified chord
$2\sqrt2+(2-\sqrt2)\alpha$ of \cref{thm:L1-sep}, obtained from the exact
endpoint $(0,2\sqrt2)$ and the nonsignaling ceiling
$\omega_{\rm ns}(F_\alpha)=4$ (orange dashed line, \cref{lem:ns-ceiling}),
below which the Alice-conditioned body sits
(Lemma~\ref{lem:os1-ns} and Proposition~\ref{prop:sandwich}); the strip
between chord and standard line is the certified margin
$(\sqrt2-1)\alpha$.  Green marks are ten exact rational upper bounds
$U_i$ (\cref{rem:anchors}) --- cross-checks, not load-bearing.  Inset:
near $\alpha=0$ the Alice-conditioned value leaves $2\sqrt2$ with vanishing right
derivative (numerical fit $2\sqrt2+\alpha^2/\sqrt2$), the standard value
linearly.  Certified: the endpoint, the ceiling, the chord, and the
anchor points; the drawn curves are numerical.}
\label{fig:L1}
\end{figure}

\subsection{The Alice-conditioned level-one value as a function of alpha}
\label{sec:L1-form}

High-precision KKT analysis (150-digit homotopy refinement plus algebraic
identification; numerical, not yet a formal proof) indicates that
$\omega_1^{\rm os}(F_\alpha)$ is a two-regime algebraic function with
critical point $\alpha_c=1-1/\sqrt3$: for $0\leq\alpha\leq\alpha_c$,
\begin{equation*}
\omega_1^{\rm os}(F_\alpha)^2
 =\frac{2\alpha^4+10\alpha^2-1+(1+4\alpha^2)^{3/2}}{2\alpha^2}
 \qquad\Bigl(=8\text{ at }\alpha=0\Bigr),
\end{equation*}
with $\omega_1^{\rm os}(F_\alpha)=2\sqrt2+\frac{\sqrt2}{2}\alpha^2
-\frac{5\sqrt2}{16}\alpha^4+\cdots$ as $\alpha\to0$; in particular the
right derivative at $0$ vanishes while the standard value leaves
$2\sqrt2$ linearly.  We use this only as guidance; no theorem in this
paper depends on it.

\section{Level two: the separation reverses}
\label{sec:L2}

The search of \cref{sec:search} produced the following functional.  The
lead candidate $G^\ast$ (rationalized; this is the functional used
below) has correlator matrix and marginals
\begin{equation*}
C=\begin{pmatrix}1 & \tfrac{355}{647}\\[2pt] -\tfrac{767}{846} & \tfrac{857}{980}\end{pmatrix},
\qquad
a=\bigl(\tfrac{256}{965},\tfrac{266}{719}\bigr),\qquad
b=\bigl(\tfrac{62}{183},0\bigr),
\end{equation*}
in the convention $F=\sum_{xy}c_{xy}E_{xy}+\sum_xa_xE_{A_x}+\sum_yb_yE_{B_y}$
with $E_{A_x}=P_{A_{0|x}}-P_{A_{1|x}}$ and likewise for Bob.  Its
$|c_{xy}|$-multiset has four distinct values and it has three nonzero
marginal components, so it lies outside the equivalence closure of the
catalogued families.

\begin{theorem}[Exact level-two cone separation]
\label{thm:L2-sep}
Let $\eta=501/200$.  Then
\begin{equation*}
\eta I-\Bell_{G^\ast}\ \in\ \mathsf D_2\ \text{ but }\ \notin\ \mathsf O_2 .
\end{equation*}
\end{theorem}

\begin{proof}[Proof (two exact certificates; verified without solvers)]
Membership in $\mathsf D_2$: an exact rational standard level-two dual
certificate
\begin{equation*}
\eta I-\Bell_{G^\ast}=\operatorname{Tr}(S\,\Gamma(m))+\nu\,(m(I)-1)
\end{equation*}
holds coefficient-wise in the free moments, with the $13\times13$ Gram
matrix $S$ exactly PSD (rational $LDL^\ast$ test).  Nonmembership in
$\mathsf O_2$: an exact rational Alice-conditioned level-two primal witness
(blocks $\Gamma_{a|x}$, exact PSD, exact consistency and normalization,
with the last-outcome block carrying the consistency residual and a
$10^{-3}$ admixture of an exact feasible reference point lifting the
rationalized blocks off the PSD boundary; the final block PSD is verified
exactly a posteriori) attains the exact rational
objective value $w>\eta$, with $w\approx2.51030724$.
The full rational value and all witness entries are stored in
\datapath{artifacts/m3/exact/search_s31337_308.json}; the verifier
reconstructs $w$ from those entries and checks $w>\eta$ exactly. Hence
$\omega_2^{\rm std}(G^\ast)\leq\eta<\omega_2^{\rm os}(G^\ast)$.
Verification: \codepath{scripts/run_m3_exactify.py} and its solver-free
\texttt{{-}{-}verify-only} mode.
\end{proof}

\begin{corollary}[One extra level for the bound-$\eta$ polynomial]
\label{cor:dos3}
The Bell-gap polynomial $\eta I-\Bell_{G^\ast}$ belongs to $\mathsf O_3$:
an exact rational Alice-conditioned level-three dual certificate with bound
$\eta=501/200$ holds coefficient-wise, with all four Gram blocks exactly
PSD (\datapath{artifacts/m3/exact/o3_308.json}, verified without solvers).
Hence, at the target bound $\eta$ of \cref{thm:L2-sep},
$d_{\rm os}(G^\ast,\eta)=3$ while $d_{\rm std}(G^\ast,\eta)=2$
(\cref{def:convdeg}): the standard level-two certificate converts to the
Alice-conditioned cone at the price of exactly one additional level,
$\eta I-\Bell_{G^\ast}\in\mathsf D_2\cap(\mathsf O_3\setminus\mathsf O_2)$.
The standard degree cannot be one, since
$\mathsf D_1\subseteq\mathsf O_1\subseteq\mathsf O_2$ would contradict
the separating witness. The target $\eta=2.505$ is our certified bound, not the quantum value
(numerically $\omega_q\approx2.49728$); $d_{\rm os}(G^\ast,\omega_q)$ is
not claimed.
\end{corollary}

\begin{theorem}[Uniform separation on a coefficient neighbourhood]
\label{thm:robust}
Write
\begin{equation*}
\theta^\ast=(a_0,a_1,b_0,b_1,c_{00},c_{01},c_{10},c_{11})\in\QQ^8
\end{equation*}
for the $\pm1$-observable coefficients of $G^\ast$ displayed above, and
let $r=1/400$ and $\eta'=\eta+r=1003/400$.  Then for every
$\theta\in\mathbb{R}^8$ with $\|\theta-\theta^\ast\|_1\leq r$, the associated
functional $G_\theta$ satisfies
\begin{equation*}
\omega_2^{\rm std}(G_\theta)\ \leq\ \eta'\ <\ \omega_2^{\rm os}(G_\theta),
\qquad
\eta' I-\Bell_{G_\theta}\ \in\ \mathsf D_2\cap(\mathsf O_3\setminus
\mathsf O_2).
\end{equation*}
\end{theorem}

\begin{proof}[Proof (exact transport; verified without solvers)]
Write $X_j$ for the eight $\pm1$ monomials
\begin{equation*}
X_j\ \in\ \{E_{A_0},E_{A_1},E_{B_0},E_{B_1},E_{00},E_{01},E_{10},E_{11}\};
\end{equation*}
each
satisfies $X_j^2=I$, hence $I\pm X_j=\tfrac12(I\pm X_j)^2$.  With
$\delta=\theta-\theta^\ast$ one has
$\Bell_{G_\theta}-\Bell_{G^\ast}=\sum_j\delta_jX_j$ (the constant term
cancels by the outcome-sign symmetry) and
\begin{multline*}
\eta' I-\Bell_{G_\theta}
=\bigl(\eta I-\Bell_{G^\ast}\bigr)\\
+\tfrac12\sum_j\Bigl(\delta_j^{+}(I-X_j)^2+\delta_j^{-}(I+X_j)^2\Bigr)
+\bigl(r-\|\delta\|_1\bigr)I ,
\end{multline*}
with $\delta_j^+=\max(\delta_j,0)$ and
$\delta_j^-=\max(-\delta_j,0)$.
The first bracket is the certificate of \cref{thm:L2-sep}
(resp.\ \cref{cor:dos3}); each summand is a nonnegative multiple of a
square of a word polynomial of length $\leq2$ using at most one Alice
question, hence lies in $\mathsf D_2$ and in $\mathsf O_3$.  This proves
the membership and $\omega_2^{\rm std}(G_\theta)\leq\eta'$.  For
nonmembership, the exact Alice-conditioned level-two witness of
\cref{thm:L2-sep} evaluates every $X_j$ in $[-1,1]$ (its induced behavior
is nonsignaling, \cref{lem:os1-ns}), so its objective value on
$G_\theta$ is at least $w-\|\delta\|_1\geq w-r$, where $w$ is its exact
rational value on $G^\ast$.  The comparisons $w-\eta>1/200$ and
$w-r>1003/400$ are exact rational arithmetic
(\datapath{artifacts/m3/robust_family.json}, which also re-derives the
eight expectation values and the value identity from the stored block
entries).  Hence $\eta'I-\Bell_{G_\theta}\notin\mathsf O_2$.
\end{proof}

\begin{remark}
\label{rem:robust-scope}
The radius is limited by the exact margin $w-\eta\approx5.3\times10^{-3}$:
any $r<(w-\eta)/2$ works.  The statement certifies the conversion degree
of the explicit upper-bound polynomial $\eta'I-\Bell_{G_\theta}$:
$d_{\rm os}(G_\theta,\eta')=3$ while $d_{\rm std}(G_\theta,\eta')=2$
(\cref{def:convdeg}); it does not claim that the perturbed games'
quantum values close at level three.
\end{remark}

\begin{remark}[Replication and controls]
\label{rem:replication}
A second, inequivalent game (candidate \texttt{s777\#298}) passes the same
pipeline with $\eta=11197/5000$, so the phenomenon is not a single-instance
accident.  A transport control from the same search admits an exact
Alice-conditioned level-two dual certificate with bound
$\omega_2^{\rm std}+1.2\times10^{-6}$, so the contrast is not a solver
artifact.  For both separated candidates, an explicit two-qubit
tensor-product strategy attains $\omega_2^{\rm std}$ to within
$2.2\times10^{-9}$ numerically, i.e. the standard level-two value closes at
the quantum value; we record this as numerical evidence, not as part of the
exact statement.
\end{remark}

\subsection{Search protocol}
\label{sec:search}

We sampled $2700$ general $(2,2,2,2)$ functionals
$F=\sum_{xy}c_{xy}E_{xy}+\sum_x a_xE_{A_x}+\sum_y b_yE_{B_y}$ (normalized
correlator part, random marginal structure, fixed seeds) and computed
$\omega_2^{\rm std},\omega_3^{\rm std},\omega_2^{\rm os},\omega_3^{\rm os}$
on the raw cones with a two-solver fallback chain (CLARABEL, then SCS on
failure), plus the classical value by exhaustive
enumeration; the two candidates carried through the full pipeline below
were re-solved independently under both solvers
(\datapath{artifacts/m3/candidates/}).  Candidates are flagged by machine-checked novelty: the
ratio multiset $\{|c_{xy}|/\max_{x'y'}|c_{x'y'}|\}$ is invariant under question permutations, outcome
flips, party swap, and rescaling; CHSH-type correlators have all
$|c_{xy}|$ equal, XOR games have no marginals, and $B_3$ lives in
$(2,2,3,3)$.  This protocol produced $171$ level-two transport examples
and $12$ separation candidates with
$\omega_2^{\rm std}=\omega_3^{\rm std}=\omega_3^{\rm os}$ but
$\omega_2^{\rm os}-\omega_2^{\rm std}\in[2.5\times10^{-4},1.55\times10^{-2}]$,
of which $G^\ast$ and the replicated candidate of
\cref{rem:replication} were carried through dual-solver, outcome-complete,
and explicit-strategy checks before exactification.

\section{Tilted CHSH: exact points and a separating interval}
\label{sec:L2t}

\begin{lemma}[Standard and almost-quantum exactness]
\label{lem:standard-tilted}
For every $0\leq\alpha<2$,
\begin{equation*}
\omega^{1+AB}(F_\alpha)=\omega^{\rm std}_2(F_\alpha)
=\omega_q(F_\alpha)=q(\alpha)=\sqrt{8+2\alpha^2}.
\end{equation*}
For $0<\alpha<2$, $d_{\rm std}(F_\alpha,q(\alpha))=2$.
\end{lemma}
\begin{proof}
Set $R=qI-F_\alpha$ and
$S'=A_0(B_0-B_1)+A_1(B_0+B_1)$. The identity of
\cite[Eq.~(27)]{BampsPironio2015} is
\begin{equation}
R=\frac{R^2+(\alpha A_1-S')^2}{2q}.
\label{eq:bp15}
\end{equation}
Both squared Hermitian polynomials, $R$ and $\alpha A_1-S'$, lie in
$\operatorname{span}\{I,A_0,A_1,B_0,B_1,A_xB_y\}$, the $1{+}AB$ word
space, and $1/(2q)>0$. Our verifier expands \eqref{eq:bp15} word by word
with $\alpha$ symbolic. The quantum strategy of \cref{sec:face} attains
$q$ and is feasible in both relaxations. Since $1{+}AB$ is contained in
the standard level-two word space,
$q\leq\omega^{\rm std}_2\leq\omega^{1+AB}\leq q$.
Finally \cref{thm:L1-std} gives $\omega^{\rm std}_1>q$ for $0<\alpha<2$.
The symbolic SOS, attaining-state identities, and the four outcome
congruences are stored in \datapath{artifacts/prl/standard_tilted/family.json}
and rechecked by the common verification entry point.
\end{proof}

Thus the standard side is exact on the entire tilted family. The
Alice-conditioned level-two relaxation behaves differently, as certified
below; its complete exactness region remains open.

\begin{figure}[t]
\centering
\includegraphics[width=0.68\textwidth]{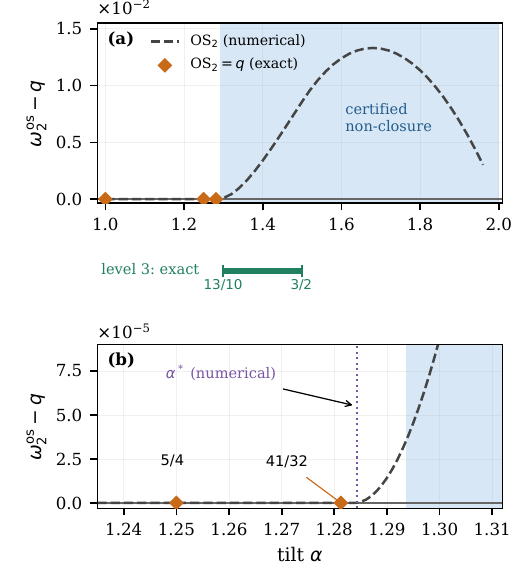}
\caption{Exact quantum-bound certification. Orange diamonds are exact
level-two closure points. The separate green strip marks an interval of
exact level-three closure; it is not plotted on the level-two gap axis.
The blue open interval shows the finite witness fan at level two;
\cref{sec:unbounded} additionally extends non-closure to every
$\alpha\in(\alpha_-,2)$.
Dashed curves and the candidate boundary near $1.28427$ are numerical;
the existence of a unique transition is not proved.}
\label{fig:L2}
\end{figure}

\begin{theorem}[Exact quantum-bound certificates and separation]
\label{thm:L2-threshold}
For tilted CHSH at level two:
\begin{enumerate}[label=(\roman*)]
\item (non-closure, exact on an open interval) for all
$\alpha\in(\alpha_-,2)$, where $\alpha_-\approx1.293774$ is an exact
algebraic coverage boundary (strict separation at $\alpha_-$ is not
asserted, and at $\alpha=2$ the quantum value is attained),
\begin{equation*}
\omega_2^{\rm os}(F_\alpha)\ >\ \sqrt{8+2\alpha^2};
\end{equation*}
in particular, at the quantum-value target the conversion degrees satisfy
$d_{\rm os}(F_\alpha,q(\alpha))\geq3$ while $d_{\rm
std}(F_\alpha,q(\alpha))\leq2$ on this interval (\cref{def:convdeg}; the
standard side closes at level two);
\item (exact level-two closure) at
$\alpha\in\{1,5/4,41/32\}$, explicit algebraic dual certificates prove
$\omega_2^{\rm os}(F_\alpha)=q(\alpha)$ exactly;
\item (exact level-three closure on an interval) for every
$\alpha\in[13/10,3/2]$, a rational-function dual certificate proves
$\omega_3^{\rm os}(F_\alpha)=q(\alpha)$ exactly. This entire closed interval
lies inside the open interval in (i), so $d_{\rm os}(F_\alpha,q)=3$ and
$d_{\rm std}(F_\alpha,q)=2$.
\end{enumerate}
\end{theorem}

\begin{proof}[Proof sketch (exact verification per step)]
(i) Each anchor's exact primal witness $m_i$ (rational, verified feasible)
gives an affine lower bound $\omega_2^{\rm os}(F_\alpha)\geq
g_i+\alpha h_i$ valid for \emph{all} $\alpha$: the Alice-conditioned level-two
feasible body does not depend on $\alpha$ at all --- the tilt enters only
the objective $F_\alpha$, which is affine in $\alpha$ --- so evaluating the
support function at a fixed feasible point is legal.  Non-closure at $\alpha$ follows when
$g_i+\alpha h_i>\sqrt{8+2\alpha^2}$; since both sides are positive on our
range this is the quadratic inequality
$(g_i+\alpha h_i)^2>8+2\alpha^2$ with rational coefficients, whose roots
are computed by exact root isolation over $\QQ$.  Ten anchors
($13/10,21/16,11/8,3/2,13/8,7/4,15/8,19/10,197/100,199/100$) give overlapping
intervals merging to the open interval $(\alpha_-,\alpha_+)$.  The band
$(41/32,\alpha_-)$ lies between the largest exact level-two closure
point and the lower coverage boundary. It has no complete classification
from the present certificates: the
generator only attempts anchors whose numerical gap exceeds a
robust-witness cutoff of $2\times10^{-5}$, so a finer fan there is an
unresolved certification problem. The number $\alpha^\ast$ is only a
numerical candidate, not an endpoint of a certified interval.
The analytic level-two witness in \cref{sec:unbounded} separates on
$(254/161,2)$; this overlaps the fan at $\alpha=8/5$ and extends the
certified union to $(\alpha_-,2)$. The upper fan boundary $\alpha_+$
is retained only to describe that finite collection's coverage.
(ii) follows from the algebraic point certificates in \cref{sec:face}.
(iii) follows from the rational-function certificate and its exact
Bernstein positivity proof in \cref{sec:interval}. The independent
verifiers check the original free-moment identities and all PSD blocks.
For the standard degree, \cref{thm:L1-std} gives
$2\sqrt2+\alpha>q(\alpha)$, while the standard degree-two certificate is
the symbolic tilted-CHSH SOS in \cref{lem:standard-tilted}.
\end{proof}

\begin{remark}[The critical tilt as a numerical threshold]
\label{rem:threshold-numeric}
Numerically, closure transitions at $\alpha^\ast\approx1.28427$ (bisection
with two independent solvers): below it the Alice-conditioned level-two value is
not distinguishable from the quantum value at our numerical precision,
above it a strictly positive gap opens continuously.  We do not prove the
existence of a single transition point: both $\omega_2^{\rm os}(F_\alpha)$
and $\sqrt{8+2\alpha^2}$ are convex in $\alpha$, so in principle their
difference could oscillate; the certified content of this section is
exactly items (i)--(iii) of \cref{thm:L2-threshold}, and the transition
picture should be read as a numerical observation.  Note also that the
band $(41/32,\alpha_-)$ has no complete
classification from our certificates. This band is specified by an
exact closure point and an exact coverage boundary, without using the
numerical $\alpha^\ast$ as a certified endpoint.
\end{remark}

\section{Exact optimal-face certificates}
\label{sec:face}
Let $q=\sqrt{8+2\alpha^2}$, $c=q/4$, $d=\sqrt{1-c^2}$,
$z=\alpha/(2c)$, and $t=\sqrt{1-z^2}$. The optimal strategy uses
$A_0=Z$, $A_1=X$, $B_0=cZ+dX$, $B_1=cZ-dX$ and
\begin{equation*}
\rho_\alpha=\frac12\begin{pmatrix}
1+z&0&0&t\\0&0&0&0\\0&0&0&0\\t&0&0&1-z
\end{pmatrix}.
\end{equation*}
The verifier checks $\rho_\alpha^2=\rho_\alpha$,
$\operatorname{tr}\rho_\alpha=1$, and
$\operatorname{tr}(\rho_\alpha F_\alpha)=q$ exactly.
All strategy and certificate matrices here are real. Taking real parts
of a Hermitian feasible dual preserves PSD and the real affine identity,
so there is no loss in using real matrices. We use $T$ for their transpose;
$\dagger$ continues to denote the adjoint of an operator word.

Put $r=(1-z)/t$. Up to a positive normalization of their outer products,
the conditional Bob vectors are $(1,0)^T,(0,1)^T,(1,r)^T,(1,-r)^T$.
For each vector $v$, form $V$ with columns $wv$ over the retained
Bob-projector word basis. Its Gram matrix is proportional to the quantum
block. At all five points $V$ has rank two. Compute a basis $K$ of
$\ker V$ by exact algebraic elimination.

The dual identity has the form
\begin{equation*}
q-F(m)=\sum_{a,x}\operatorname{tr}(S_{a|x}\Phi_{a|x}(m))
+\text{consistency and normalization terms}.
\end{equation*}
Evaluating an optimal dual on the quantum strategy gives
$\sum_{a,x}\operatorname{tr}(S_{a|x}\Phi^q_{a|x})=0$.
Each summand is nonnegative. Therefore
$S_{a|x}\Phi^q_{a|x}=0$, and all attaining duals have
$S_{a|x}=K_{a|x}H_{a|x}K_{a|x}^{T}$ with $H_{a|x}\succeq0$.
Conversely this factorization, together with the original identity,
proves the bound $q$. Alice-conditioned dual attainment proved in Appendix A of the Letter makes this an if-and-only-if criterion.

The new certificates have the following dimensions and exact targets:
\begin{center}
\begin{tabular}{ccccc}
\toprule
$\alpha$ & level & $q$ & Gram size & reduced size\\
\midrule
$1$ & $2$ & $\sqrt{10}$ & $5$ & $3$\\
$5/4$ & $2$ & $\sqrt{178}/4$ & $5$ & $3$\\
$41/32$ & $2$ & $\sqrt{11554}/32$ & $5$ & $3$\\
$13/10$ & $3$ & $\sqrt{1138}/10$ & $7$ & $5$\\
$3/2$ & $3$ & $5\sqrt{2}/2$ & $7$ & $5$\\
\bottomrule
\end{tabular}
\end{center}
The two level-three rows are independent algebraic cross-checks of the
endpoints. The continuous certificate in \cref{sec:interval} supersedes
them as the proof of level-three closure; they are retained to check the
rational-function construction against a separate number-field route.
Exact elimination leaves nine free coordinates at each level-two point
and 36 at each level-three point. Only these free coordinates are
rationalized after numerical search. The dependent coordinates are
computed in the exact number field, so the affine identity is preserved.

Each JSON file in \datapath{artifacts/prl/quantum_face/} stores the exact
$K,H$, Gram entries, and all nonzero consistency and normalization
multipliers. The independent verifier does not trust the generator's
reduced identity, rank, numerical eigenvalues, or status flag. It checks
the original free-moment identity coefficient-wise, verifies $VK=0$ and
$S=KHK^T$, and tests the reduced matrices by exact $LDL^T$ arithmetic.
The common solver-free driver is \codepath{scripts/run_reproduction.py}:
\begin{verbatim}
python scripts/run_reproduction.py
\end{verbatim}
It invokes \codepath{scripts/verify_certificates.py} as its core
certificate checker, together with the auxiliary exact checks described
in \cref{sec:repro-commands}. To run only the core checker, use
\begin{verbatim}
python scripts/verify_certificates.py
\end{verbatim}
For the five optimal-face certificates, use \codepath{scripts/quantum_face.py}:
\begin{verbatim}
python scripts/quantum_face.py
\end{verbatim}
For the generator \codepath{scripts/run_quantum_face.py}, use
\begin{verbatim}
python scripts/run_quantum_face.py --alpha 1 --level 2
\end{verbatim}
Use the other four table entries to regenerate their certificates.
Numerical optimization is used only in generation. During verification,
\texttt{cvxpy.Problem.solve} is patched to raise an exception.

\subsection{An exact continuous certificate family}
\label{sec:interval}
The interval proof uses a rational parameter for the quantum-bound curve:
\begin{equation}
 c(u)=\frac{u^2-4u+8}{u^2-8},\qquad
 \alpha(u)=\frac{-2(u^2-8u+8)}{u^2-8},\qquad q(u)=4c(u).
 \label{eq:interval-param}
\end{equation}
On $J=[43/10,447/100]$, $0<c<1$, $0<\alpha<2$, and
$q^2=8+2\alpha^2$. Since
$\alpha'(u)=-16(u^2-4u+8)/(u^2-8)^2<0$, its image is
\begin{equation}
 \alpha(J)=\left[\frac{155582}{119809},\frac{1582}{1049}\right]
 \supset\left[\frac{13}{10},\frac32\right].
 \label{eq:interval-image}
\end{equation}

To remove square roots from the kernel calculation, conjugate the physical
Bob projectors by $\operatorname{diag}(1,d)$, where $d^2=1-c^2>0$:
\begin{equation*}
 \widetilde P_0=\frac12\begin{pmatrix}1+c&1\\1-c^2&1-c\end{pmatrix},\qquad
 \widetilde P_1=\frac12\begin{pmatrix}1+c&-1\\-(1-c^2)&1-c\end{pmatrix}.
\end{equation*}
The transformed conditional vectors can be scaled to
$(1,0)^T,(0,1)^T,(1,c-\alpha/2)^T,(1,-c+\alpha/2)^T$.
This invertible change affects neither the word-map kernels nor their
dimension. Form $V_{a|x}(u)$ from the seven level-three Bob words and these
vectors. Write $V_{a|x}=(V^{(0)}_{a|x}\ V^{(1)}_{a|x})$, where
$V^{(0)}_{a|x}$ contains the first two columns. They give the rational basis
\begin{equation*}
 K_{a|x}(u)=\begin{pmatrix}
 -(V^{(0)}_{a|x})^{-1}V^{(1)}_{a|x}\\ I_5
\end{pmatrix}.
\end{equation*}
In the displayed vector order, the four determinants of $V^{(0)}$ are
\begin{equation}
 \frac{4u(u-4)(u-2)}{(u^2-8)^2},\quad -\frac12,\quad
 -\frac{4(u-4)^2(u-2)^2}{(u^2-8)^2},\quad
 \frac{8u(u-4)(u-2)}{(u^2-8)^2}.
 \label{eq:interval-minors}
\end{equation}
All are nonzero on $J\subset(4,\infty)$. Thus every $V$ has rank two
and every kernel has dimension five throughout $J$, without a rank-jump
assumption. The verifier reconstructs the four word maps, checks $VK=0$
as rational-function identities, and excludes zeros of these minors on
the entire interval.
Exact elimination of the original dual
identity with $S_{a|x}=K_{a|x}H_{a|x}K_{a|x}^T$ gives a rational affine
solution with 36 free coordinates. We choose each free coordinate as a
quadratic polynomial in $(u-u_0)/r$, where $u_0=877/200$ and $r=17/200$.
Numerical optimization selects their coefficients on nine search nodes;
rounding to a common rational grid and exact substitution give a candidate
identity at every $u$. Search nodes are not used as a positivity proof.
They are not used to verify the identity either: after expansion in the
independent free-moment coordinates, each of its 41 coefficients is
checked as an identity in $\QQ(u)$ by reduction to a zero numerator
polynomial. This includes all Gram and consistency/normalization
multiplier contributions and does not substitute sample values for $u$.

The resulting four matrices have the form $H_{a|x}(u)=N_{a|x}(u)/D(u)$,
where each $N_{a|x}$ is a symmetric rational-coefficient matrix polynomial
of degree at most 20 and
\begin{equation}
 D(u)=u(u-4)(u-2)(u^2-8)^4(u^2-4u+8)^3>0\qquad(u\in J).
 \label{eq:positive-denominator}
\end{equation}
Write $s=(100u-430)/17\in[0,1]$. The degree-20 Bernstein expansions are
\begin{equation}
 N_{a|x}(u)=\sum_{j=0}^{20} C_{a|x,j}\binom{20}{j}s^j(1-s)^{20-j}.
 \label{eq:bernstein-interval}
\end{equation}
Every one of the $4\times21=84$ rational matrices $C_{a|x,j}$ is PSD,
as verified by exact $LDL^T$ arithmetic. The weights are nonnegative and
sum to one, hence $N_{a|x}(u)\succeq0$ and $H_{a|x}(u)\succeq0$ on all of
$J$. Together with the exact original identity this proves
$\omega_3^{\rm os}(F_{\alpha(u)})\leq q(u)$. The attaining quantum strategy
provides the reverse bound. Equation~\eqref{eq:interval-image} then proves
the Letter's full closed interval; combining it with the standard SOS and
the level-two tangent fan gives exactly $d_{\rm std}=2$, $d_{\rm os}=3$
throughout $[13/10,3/2]$.
In fact all 84 matrices are nonsingular and hence positive definite;
their determinants are checked exactly. The reduced family therefore
has a uniform strictly positive margin on $J$.

The artifact \datapath{artifacts/prl/interval/level3_interval.json} stores
$K$, the numerator polynomials, and the original Gram and multiplier
functions. The independent checker reconstructs the full moment identity
from the raw algebra, checks $S=KNK^T/D$, and derives all 84 Bernstein
matrices itself. It also verifies the word-map kernels and their constant
rank, and certifies that every reduced rational-function
denominator is nonzero on $J$. Neither a sampled eigenvalue nor a stored
success flag is accepted as proof. Commands are
\begin{verbatim}
python scripts/quantum_interval.py
python scripts/run_quantum_interval.py
\end{verbatim}
The first runs the solver-free checker \codepath{scripts/quantum_interval.py};
the second uses \codepath{scripts/run_quantum_interval.py} to regenerate a
candidate and verify the entire interval before writing it.

\subsection{When the interval procedure applies}
\label{sec:interval-general}
The kernel reduction and the parameter-interval certificate serve
different purposes: the first identifies a necessary face of the dual
cone, while the second gives a finite positivity proof for a whole
family on that face. Polynomial selections for parameter-dependent
linear matrix inequalities are established by Bliman \cite{Bliman2004},
and Bernstein positivity has a classical certificate literature
\cite{BoudaoudCarusoRoy2008}. The following specialization combines those
principles with exact rational affine elimination. It is a sufficient
condition for this certificate format, not a claim that these component
methods originate here.

\begin{proposition}[Finite rational interval certificates]
\label{prop:interval-general}
Let $J=[a,b]$ have rational endpoints $a<b$, and fix a native hierarchy
level. Suppose fixed full-column-rank rational kernel bases $K_b(u)$ have
no poles on $J$ and the resulting reduced dual affine identity has a parameterization
\begin{equation}
 z(u,t)=z_0(u)+Z(u)t,\qquad z_0,Z\in\QQ(u),
 \label{eq:affine-parameterization}
\end{equation}
with no poles on $J$, where $z$ collects the reduced Gram entries and
equality multipliers. For each $u\in J$, assume some $t\in\mathbb R^r$
makes every reduced Gram block $H_b(u,t)$ positive definite.
Then there exist $t(u)\in\QQ[u]^r$, a polynomial $D(u)>0$ on $J$, and
a finite degree $n$ such that the identity holds exactly in $\QQ(u)$,
all $N_b(u)=D(u)H_b(u,t(u))$ are polynomial matrices, and their degree-$n$
Bernstein coefficient matrices on $J$ are positive definite.
\end{proposition}
\begin{proof}
At a fixed $u_i$, choose strictly feasible free coordinates $t_i$.
Continuity of the pole-free affine maps makes $t_i$ feasible on a relative
open neighborhood of $u_i$. A finite subcover of $J$ and a subordinate
continuous partition of unity $\lambda_i(u)$ give
$t_*(u)=\sum_i\lambda_i(u)t_i$. Because the Gram maps are affine in $t$,
$H_b(u,t_*(u))=\sum_i\lambda_i(u)H_b(u,t_i)\succ0$.
There are finitely many blocks, so compactness yields a uniform margin
$\delta>0$ for their smallest eigenvalues.

Approximate the continuous vector $t_*$ uniformly by a rational-coefficient
polynomial vector $p$. For example, its Bernstein approximants converge
uniformly, and their finitely many real coefficients can be approximated
by rationals. The linear part of each Gram map is uniformly bounded on
$J$, so a sufficiently accurate approximation leaves
$H_b(u,p(u))\succeq(\delta/2)I$. The affine identity remains exact by
\eqref{eq:affine-parameterization}. Multiplying by the product of the
squares of its nonvanishing denominator polynomials gives a common
rational-coefficient $D>0$ with polynomial numerators, still uniformly
positive definite.

Rescale to $s=(u-a)/(b-a)\in[0,1]$ and write
$N(s)=\sum_{\ell=0}^{d} A_\ell s^\ell$. Its degree-$n$ Bernstein
coefficients, for $n\geq d$, are
\begin{equation}
 C_{j,n}=\sum_{\ell=0}^{\min(j,d)}
 A_\ell\frac{\binom{j}{\ell}}{\binom{n}{\ell}},
 \qquad 0\leq j\leq n.
 \label{eq:bernstein-elevation}
\end{equation}
For fixed $d$, the scalar ratio in this formula differs from
$(j/n)^\ell$ by $O(1/n)$ uniformly over $j$, taking the ratio to be zero
when $j<\ell$. Hence
$\max_j\|C_{j,n}-N(j/n)\|\to0$. The uniform positive margin makes every
$C_{j,n}$ positive definite once $n$ is sufficiently large. Choose a
common such $n$ for the finitely many blocks.
\end{proof}

The native hierarchy level is fixed in this proposition. Increasing the
degree of the parameter polynomial or its Bernstein representation does
not introduce longer operator words. The result ensures existence of a
finite certificate under the stated hypotheses; it guarantees neither
the success of our quadratic search nor a degree bound. Known attainment
and an algebraic quantum value alone are insufficient: algebraicity does
not ensure a rational parameterization, and the strategy kernel need not
be the smallest feasible face. If reduced Grams remain singular, further
face reduction or another positivity argument may be necessary.

For a simple illustration of degree elevation, the polynomial matrix
$N(s)=\operatorname{diag}((s-1/2)^2+1/16,1)$ is positive definite on
$[0,1]$. Its degree-two middle coefficient has first entry $-3/16$,
whereas all six degree-five coefficients are positive definite (their
smallest diagonal entry is $1/80$). Thus failure at a chosen Bernstein
degree is not a proof of infeasibility. The corresponding statement with
only semidefinite feasibility would be false: $(s-1/2)^2$ vanishes at an
interior point, where every Bernstein weight is positive, so it cannot
have all nonnegative coefficients at any finite degree.

\paragraph{Historical positive-tolerance certificates.}
The files \datapath{artifacts/m6/direct_pin_*.json} still certify
$q< t_\alpha<q+2\times10^{-8}$ at $5/4,41/32$; the older level-three
certificates give $q+10^{-6}$ bounds. These are consistent auxiliary
checks, superseded for exactness by the new certificates. Their failed
boundary continuations do not prove any algebraic-degree obstruction.
In particular, the historical sliding script demanded a valid upper
bound $B'<q$, which contradicts the attaining quantum strategy. The
script now rejects this invalid target before solving an SDP.

\section{Operational consequence for device-independent randomness}
\label{sec:randomness}

The quantum CHSH guessing-probability tradeoff used here is known
\cite{MasanesPironioAcin2011}; the new conclusion is its exact native-level
certification cost. We specify the side-information model and prove
the transfer from Bell-bound separation to randomness-bound separation.
The latter needs a convexity argument: a witness exceeding a tilted
bound need not itself have the required CHSH expectation.

\subsection{Quantum side information and finite-level relaxations}
Write $S(p)=\langle F_0\rangle_p$ and $b(p)=\langle A_0\rangle_p$.
For a subnormalized behavior, let $n(p)$ denote its normalization.
Let $\mathcal K$ be a compact convex normalized moment body, projected
onto behavior coordinates, and let $\operatorname{cone}(\mathcal K)$
be its subnormalized cone, including zero. Define
\begin{equation}
 \begin{split}
 G_{\mathcal K}(s)=\max_{p^+,p^-\in\operatorname{cone}(\mathcal K)}\quad&
 \frac{n(p^+)+b(p^+)+n(p^-)-b(p^-)}{2},\\
 \text{subject to}\quad&n(p^+)+n(p^-)=1,\\
 &S(p^+)+S(p^-)=s.
 \end{split}
 \label{eq:guess-conic}
\end{equation}
For $\mathcal K=\mathcal Q$, this is the device-independent probability
of guessing Alice's outcome at $x=0$ from knowledge of $S=s$ alone.
Indeed, any measurement of Eve's quantum system followed by a binary
guess prepares two subnormalized quantum behaviors on Alice and Bob.
Conversely, any such pair is realizable by a direct-sum construction
with a classical flag available to Eve. This is the usual conic
description of guessing probability \cite{NietoSillerasPironioSilman2014},
here with one observed statistic instead of a fixed complete behavior.
Replacing $\mathcal Q$ by either native-level moment body gives a sound
upper bound against arbitrary quantum side information. No assumption
that Eve's initial information is classical is being made.

The familiar quantum value is
\begin{equation}
 G_Q(s)=\frac{1+h_Q(s)}2,\qquad
 h_Q(s)=\sqrt{2-s^2/4},\qquad 2\leq s\leq2\sqrt2.
 \label{eq:guess-quantum}
\end{equation}
One can also derive this value directly from the standard tilted SOS
already checked in \cref{lem:standard-tilted}. For $e=\pm1$, relabeling
outcomes gives $S(p^e)+e\alpha b(p^e)\leq q(\alpha)n(p^e)$.
Summing gives
$s+\alpha(2G-1)\leq q(\alpha)$. Minimizing this upper bound at
\begin{equation}
 s_\alpha=\frac8{q(\alpha)},\qquad
 h_Q(s_\alpha)=\frac{2\alpha}{q(\alpha)}
 \label{eq:guess-param}
\end{equation}
gives \cref{eq:guess-quantum}. The attaining two-qubit strategy with
trivial Eve attains equality. This recovers a known tradeoff, rather
than establishing a new entropy formula.

\subsection{A contact criterion for exact randomness certification}
Let $R$ simultaneously reverse both outcomes for both questions on
both parties. Thus $S(Rp)=S(p)$ and $b(Rp)=-b(p)$.
Both finite hierarchies are invariant under $R$: a complemented
projector is $I-P$, so its word expansion does not increase the native
level, and the moment transformation is a congruence. In the conditioned
case Alice's blocks are also permuted. This argument applies to the raw
quotient without adding constraints.

\begin{lemma}[From a supporting Bell bound to guessing probability]
\label{lem:guess-contact}
Suppose $\mathcal K$ is compact, convex, contains the quantum behavior
body, and is invariant under $R$. Set
$h_{\mathcal K}(s)=\max\{b(p):p\in\mathcal K,\ S(p)=s\}$.
Then $G_{\mathcal K}(s)=(1+h_{\mathcal K}(s))/2$. Moreover, for
$0<\alpha<2$,
\begin{equation}
 G_{\mathcal K}(s_\alpha)=G_Q(s_\alpha)
 \quad\Longleftrightarrow\quad
 \max_{p\in\mathcal K}\{S(p)+\alpha b(p)\}=q(\alpha).
 \label{eq:guess-iff}
\end{equation}
\end{lemma}
\begin{proof}
For any pair in \cref{eq:guess-conic},
$r=p^++Rp^-\in\mathcal K$ has normalization one, CHSH expectation $s$,
and bias $b(p^+)-b(p^-)$. Conversely, $p^+=r$, $p^-=0$ realizes any
normalized feasible $r$. This proves the first identity.

The function $h_{\mathcal K}$ is finite and concave on its CHSH domain,
and $h_{\mathcal K}\geq h_Q$ in a neighborhood of $s_\alpha$.
If the Bell bound is exact, it gives
$h_{\mathcal K}(s_\alpha)\leq[q(\alpha)-s_\alpha]/\alpha
=h_Q(s_\alpha)$, proving contact.
Conversely, suppose contact holds. Since $s_\alpha$ is an interior
point of the domain, a concave supergradient exists there. For any
such supergradient $m$, the inequalities
\begin{equation}
 h_Q(s)\leq h_{\mathcal K}(s)
 \leq h_Q(s_\alpha)+m(s-s_\alpha)
\end{equation}
near $s_\alpha$ force $m=h_Q'(s_\alpha)=-1/\alpha$, by taking
difference quotients from both sides. Its supporting line bounds
$h_{\mathcal K}$ on the entire domain. Consequently,
$s+\alpha h_{\mathcal K}(s)\leq q(\alpha)$ everywhere. The quantum
strategy attains the bound, proving the converse.
\end{proof}

This is an application of concave duality, with a symmetry that includes
the guessing branches. A failure of one chosen dual certificate alone
would not establish the converse or the strict loss.

\begin{theorem}[Sharp hierarchy cost for randomness]
\label{thm:guess-cost}
Let $G_k^{\rm std}$ and $G_k^{\rm os}$ be \cref{eq:guess-conic}
for the two native-level moment bodies. For
$s\in I_S=[8/q(3/2),8/q(13/10)]$,
\begin{equation}
 G_2^{\rm std}(s)=G_3^{\rm os}(s)=G_Q(s)<G_2^{\rm os}(s).
 \label{eq:guess-sharp}
\end{equation}
The minimum exact levels are two and three, respectively. The same
equalities and strict inequality hold if the observed constraint is
replaced by $S\geq s$.
\end{theorem}
\begin{proof}
Apply \cref{lem:guess-contact} to the continuous exact-closure and
nonclosure results in \cref{thm:L2-threshold,sec:interval}.
Strictness follows by contraposition of the contact criterion, since
every relaxed value is at least the quantum value. For minimality,
$\mathsf D_1\subseteq\mathsf O_1$ and nesting imply that the standard
level-one moment body contains the conditioned level-two body.
Its guessing bound is therefore at least $G_2^{\rm os}>G_Q$;
conditioned level one also cannot be exact. For the variant $S\geq s$,
the same tangent bound gives an upper bound no larger than $G_Q(s)$
for standard level two and conditioned level three. The attaining
quantum strategy has $S=s$. The level-two separating point at exact
$S=s$ remains feasible for the weaker observed constraint.
\end{proof}

\subsection{An exact quantitative witness}
At $s_0=23/10$, $h_Q(s_0)=\sqrt{271}/20$ and
$G_Q(s_0)=(20+\sqrt{271})/40$.
Choose the rational level-two witness $W$ at tilt $11/8$ from
\datapath{artifacts/m65/tangent_fan.json}, and write $g=S(W)$, $h=b(W)$.
Choose the physical quantum strategy from \cref{sec:interval} at
$u=111/25$, with CHSH value $s_q=8/q(u)$ and bias
$h_q=2\alpha(u)/q(u)$. These are rational. Its level-two moment blocks
$Q$ are also rational, as verified by constructing them from the
physical two-qubit density matrix and Bob projectors. Set
\begin{equation}
 \lambda=\frac{s_q-s_0}{s_q-g},\qquad
 M=\lambda W+(1-\lambda)Q.
 \label{eq:guess-mixture}
\end{equation}
The exact weight obeys $0<\lambda<1$ (approximately $0.678913$), so
all four blocks of $M$ are feasible. They give $S(M)=s_0$ and
\begin{equation}
 b(M)=
 \frac{1602301208219020675967821740853154900000130939255720797}
 {1943603588252961268912416887709717200792379154154268740}.
 \label{eq:guess-rational-bias}
\end{equation}
The independent verifier checks normalization, every moment equality,
outcome-sum consistency and exact PSD of $W,Q,M$, and of the relabeled
$M$. It checks both the scalar mixture relation and its full matrix
entries. The rational inequalities
\begin{equation}
 b(M)-\frac{3}{2500}>0,\qquad
 \left(b(M)-\frac{3}{2500}\right)^2>\frac{271}{400}
\end{equation}
prove
$G_2^{\rm os}(s_0)-G_Q(s_0)>3/5000$ without a floating-point comparison.
For the entropy deficit, put $r=10007/10000$. Exact squaring checks
$[1+b(M)]/[1+\sqrt{271}/20]>r$, and the integer-power comparison
$10007^{1000}>2\cdot10000^{1000}$ gives
\begin{equation}
 \log_2\frac{G_2^{\rm os}(s_0)}{G_Q(s_0)}>
 \log_2 r>\frac1{1000}\ \text{bits}.
\end{equation}
The explicit witness yields approximately $0.00102303$ bits of deficit;
this decimal is illustrative, and is not the optimum of the SDP.

The artifact is \datapath{artifacts/prl/randomness/cost.json}. Its verifier
\codepath{scripts/randomness_cost.py} runs under the common solver-free
entry point and rejects corrupted moments, a changed CHSH statistic,
and an inflated gap. The optional numerical illustration
\codepath{scripts/run_randomness_curve.py} solves the two subnormalized
guessing branches directly and cross-checks the symmetry-reduced
one-branch program; it is not used in the proof. Only the CHSH
statistic is constrained in this task. Complete observed behaviors
can give stronger bounds and are not covered by the separation claim.
No finite-key, multi-round, or compiled-protocol security conclusion
is inferred from this single-round certification deficit.

\section{An analytic obstruction at every finite conditioned level}
\label{sec:unbounded}

All statements in this section use the reduced raw-PVM hierarchy of
\cref{sec:prelim}. In particular, positivity is tested on all Bob words
of length at most the stated native level, with both Alice outcomes.
The construction is analytic and requires no optimization or
rationalization of numerical output.

\subsection{Fej\'er-weighted positive functionals on the infinite dihedral group}

Let $\mathcal A=\mathbb R[\mathbb Z_2*\mathbb Z_2]$ be generated by
$B_0,B_1$, with $B_y^2=I$ and $B_y^*=B_y$. Its canonical trace $\tau$
extracts the identity coefficient and satisfies
$\tau(f^*f)=\sum_w c_w^2$ for $f=\sum_w c_ww$.
Let $\chi(B_0)=\chi(B_1)=-1$ and put $U=B_0B_1$. For each $k\geq1$,
define
\begin{equation}
 S_k=\sum_{j=0}^{k-1}U^j,\qquad H_k=S_k^*S_k/k,\qquad
 \tau_k(f)=\tau(H_kf).
 \label{eq:fejer-trace}
\end{equation}
The expansion
$H_k=\sum_{|j|<k}(1-|j|/k)U^j$ is invariant under $U\mapsto U^{-1}$.
Since each $B_y$ conjugates $U$ to $U^{-1}$, $H_k$ is central.
It is positive, $\tau(H_k)=1$, and
$\tau_k(f^*f)=\tau((fS_k^*)^*fS_k^*)/k\geq0$.
Centrality also makes $\tau_k$ tracial. Its reduced-word values are
\begin{equation}
 \tau_k(w)=\begin{cases}
 (1-|w|/(2k))_+,& |w|\text{ even},\\
 0,& |w|\text{ odd},
 \end{cases}
 \label{eq:fejer-moments}
\end{equation}
where $(t)_+=\max(t,0)$; every even word is $U^j$ or $U^{-j}$.
Set
\begin{equation}
 L_y(f)=\tfrac12\tau_k((I+B_y)f),\qquad L=L_0+L_1.
 \label{eq:path-L}
\end{equation}
Writing $P_y=(I+B_y)/2$, the trace property gives
$L_y(f^*f)=\tau_k((fP_y)^*fP_y)\geq0$.
Thus $L_y$ is positive on the full algebra, $L_y(I)=1/2$, and $L(I)=1$.
The functionals depend on $k$; we suppress that index on $L_y,L$.

Let $W_k$ contain the identity and the two alternating words of each
length $1,\ldots,k$, with $|W_k|=2k+1$. Set
\begin{equation}
 N=2k,\qquad r_k=\frac1{8k^2+2},\qquad v_w=(-1)^{|w|}.
 \label{eq:path-data}
\end{equation}
The right Cayley graph on $W_k$ is a path. In path order, adjacent words
differ by one generator and $|u_i^{-1}u_j|=|i-j|$.
For an even reduced word of length $d\leq2k$, $L(w)=1-d/(2k)$.
For odd $d$, multiplication by $B_0$ and $B_1$ gives even lengths
$d-1$ and $d+1$; averaging \cref{eq:fejer-moments} gives the same formula.
Consequently the Gram matrices
\begin{equation}
 (M_y)_{u,v}=L_y(u^{-1}v),\qquad
 M=M_0+M_1,\qquad M_{ij}=1-\frac{|i-j|}{N}\quad(0\leq i,j\leq N)
 \label{eq:path-grams}
\end{equation}
are specified on every entry, not only on the Bell projection.
Restrict the four functionals
\begin{equation}
 \phi_{0|0}=L-r_k\chi,\quad \phi_{1|0}=r_k\chi,\quad
 \phi_{0|1}=L_0,\quad\phi_{1|1}=L_1
 \label{eq:path-four}
\end{equation}
to moments through Bob degree $2k$. Their Grams are
$M-r_kvv^T$, $r_kvv^T$, $M_0$, and $M_1$.
All quotient and adjoint identities hold by construction; the outcome
sums equal $L$ on every word and have identity entry one.
Only the positivity of the first truncated block needs proof.

\begin{lemma}[Exact rank-one subtraction from a moving-average Gram]
\label{lem:path-psd}
For \cref{eq:path-data,eq:path-grams}, $M\succ0$ and
$M-r_kvv^T\succeq0$, with rank $2k$.
\end{lemma}
\begin{proof}
The vectors $a_i=N^{-1/2}\boldsymbol1_{\{i,\ldots,i+N-1\}}$ in
$\mathbb R^{2N}$, $0\leq i\leq N$, satisfy
$\langle a_i,a_j\rangle=1-|i-j|/N$.
Their first nonzero coordinates occur at distinct positions, so they
are linearly independent and $M\succ0$.
The common sign of the character vector is immaterial; choose
$v_i=(-1)^i$. Define
\begin{equation}
 z_i=(-1)^ix_i,\qquad x_0=x_N=N+1,\qquad x_i=2N\ (0<i<N).
 \label{eq:fejer-inverse-vector}
\end{equation}
We verify $Mz=v$ for every even $N=2k$.
For $0<i<N$ the second row difference of $M$ is
$M_{i-1,j}-2M_{ij}+M_{i+1,j}=-2\delta_{ij}/N$.
Thus $Mz$ and $v$ have the same second differences, $-4(-1)^i$;
their difference is affine in $i$. Both vectors are symmetric under
$i\mapsto N-i$, so this affine difference is constant.
At $i=0$,
\begin{equation*}
 (Mz)_0=N+1+2N\sum_{j=1}^{N-1}(-1)^j(1-j/N)=1.
\end{equation*}
Indeed pairing $j=2\ell-1,2\ell$ and adding the final odd term gives
the sum $-k/N=-1/2$. Therefore $Mz=v$ at every index and
\begin{equation}
 v^TM^{-1}v=v^Tz=2(N+1)+2N(N-1)=8k^2+2=r_k^{-1}.
 \label{eq:path-sum}
\end{equation}
Cauchy--Schwarz gives
$(v^Tf)^2\leq(v^TM^{-1}v)(f^TMf)$, proving the PSD subtraction.
Equality holds precisely on the span of $M^{-1}v$, so the rank is $2k$.
\end{proof}

The substitution $P_y=(I+B_y)/2$ is an invertible triangular change
of basis on the degree-$k$ word span, with diagonal $2^{-|w|}$.
Congruence transfers all four Grams to the original PVM hierarchy,
preserving positivity and all functional identities. Hence
\cref{eq:path-four} is feasible with both Alice outcomes and every
original projector moment. Joint probabilities are also nonnegative,
as already enforced by the degree-one principal blocks.

\subsection{Exact separation and square-root degree growth}

\begin{theorem}[No finite conditioned level contains standard level two]
\label{thm:unbounded-sm}
For each integer $k\geq1$, put $\alpha_k=2-8r_k$. Then
\begin{equation}
 0<\alpha_k<2,\qquad q(\alpha_k)I-F_{\alpha_k}
 \in\mathsf D_2\setminus\mathsf O_k.
 \label{eq:no-uniform-conversion}
\end{equation}
Thus $\mathsf D_2\not\subseteq\mathsf O_k$ for every finite $k$,
including $k=3$.
\end{theorem}
\begin{proof}
Equation~\eqref{eq:fejer-moments} gives
\begin{equation*}
 L(B_y)=1-\frac1{2k},\qquad (L_0-L_1)(B_0-B_1)=\frac1k.
\end{equation*}
Since $\chi(B_y)=-1$, the witness has
\begin{align}
 h_k&=\phi_{0|0}(I)-\phi_{1|0}(I)=1-2r_k,\\
 g_k&=(\phi_{0|0}-\phi_{1|0})(B_0+B_1)
       +(\phi_{0|1}-\phi_{1|1})(B_0-B_1)=2+4r_k.
\end{align}
Its value at $\alpha=2-\epsilon$ is $w_k=4-(1-2r_k)\epsilon$ and
\begin{equation}
 w_k^2-q(2-\epsilon)^2
 =\epsilon[16r_k-(1+4r_k-4r_k^2)\epsilon].
 \label{eq:path-residual}
\end{equation}
Since $0<r_k\leq1/10$, it strictly exceeds $q$ whenever
\begin{equation}
 0<\epsilon<t_k:=\frac{16r_k}{1+4r_k-4r_k^2}
 =\frac{4(4k^2+1)}{8k^4+8k^2+1}.
 \label{eq:fejer-threshold}
\end{equation}
In particular $8r_k<t_k$ and
\begin{equation}
 w_k^2-q(\alpha_k)^2=64r_k^2(1-2r_k)^2>0.
 \label{eq:path-sequence-gap}
\end{equation}
The standard SOS in \cref{lem:standard-tilted} supplies membership
in $\mathsf D_2$; weak duality with the full feasible witness excludes
membership in $\mathsf O_k$.
\end{proof}

The rare Alice outcome carries a deterministic Bob response of weight
$r_k$. Fej\'er weighting changes the common Bob functional as $k$ grows,
allowing $r_k\asymp k^{-2}$. For comparison, the unweighted choice
$H=I$ gives $M=I+T/2$ on the word path and
$r_k^{\rm tr}=6/[(2k+1)(2k+2)(2k+3)]\asymp k^{-3}$.
Conjugating that matrix by the sign diagonal gives a Dirichlet path
Laplacian; its inverse applied to $\boldsymbol1$ has entries
$j(2k+2-j)$, $1\leq j\leq2k+1$. The rank-one criterion is elementary
in both constructions; the degree improvement comes from changing the
positive trace, not from a different estimate of the unweighted matrix.

At $\alpha_k$, $w_k,q(\alpha_k)<4$, so
\begin{equation}
 \omega_k^{\rm os}(F_{\alpha_k})-q(\alpha_k)
 >8r_k^2(1-2r_k)^2.
 \label{eq:path-value-gap}
\end{equation}
For every $0<\epsilon\leq4/5$, define the conservative explicit bound
\begin{equation}
 K(\epsilon)=\left\lfloor\sqrt{\epsilon^{-1}-1/4}\right\rfloor.
\end{equation}
Then $K\geq1$ and $\epsilon\leq4/(4K^2+1)=8r_K<t_K$, giving
\begin{equation}
 d_{\rm os}(F_{2-\epsilon},q(2-\epsilon))>K(\epsilon),
 \qquad d_{\rm std}(F_{2-\epsilon},q(2-\epsilon))=2.
 \label{eq:degree-lower-growth}
\end{equation}
The standard equality uses \cref{thm:L1-std,lem:standard-tilted}.
The full threshold improves the asymptotic constant:
\begin{equation}
 \liminf_{\epsilon\downarrow0}\sqrt\epsilon\,
 d_{\rm os}(F_{2-\epsilon},q(2-\epsilon))\geq\sqrt2.
 \label{eq:fejer-liminf}
\end{equation}
To see this without an asymptotic inversion assumption, fix
$0<c<\sqrt2$ and choose $k=\lfloor c/\sqrt\epsilon\rfloor$.
Then $\epsilon k^2\to c^2<2$, while $k^2t_k\to2$, so eventually
$\epsilon<t_k$ and $d_{\rm os}>k$. Let $c\uparrow\sqrt2$.
This proves an $\Omega(\epsilon^{-1/2})$ lower bound, with
$d_{\rm os}=\infty$ if exact closure never occurs. No upper bound
or finite exactness for each fixed tilt is assumed. The quantifiers
$\forall k\,\exists\alpha_k$ do not assert $\exists\alpha\,\forall k$.

For $k=2$, $r_2=1/34$ and $t_2=68/161$. Thus this single feasible
witness separates throughout $\alpha\in(254/161,2)$.
The existing rational fan covers $(\alpha_-,\alpha_+)$ and contains
$8/5>254/161$, so their union is $(\alpha_-,2)$, proving the extended
non-closure interval in \cref{thm:L2-threshold}.

\subsection{An explicit counterexample and solver-free audits}

For $k=3$, use path order
\begin{equation*}
 (B_0B_1B_0,B_0B_1,B_0,I,B_1,B_1B_0,B_1B_0B_1).
\end{equation*}
Here $M_{ij}=1-|i-j|/6$, $r_3=1/74$, and the inverse vector is
$z=(7,-12,12,-12,12,-12,7)^T$ up to the common sign of $v$.
The other two blocks follow explicitly from
$L_y(w)=[\tau_3(w)+\tau_3(B_yw)]/2$, so all four $7\times7$
rational matrices are specified. The exact values are
\begin{equation}
 \alpha_3=\frac{70}{37},\quad w_3=\frac{5332}{1369},\quad
 q(\alpha_3)=\frac{4\sqrt{1297}}{37},\quad
 w_3^2-q(\alpha_3)^2=\frac{20736}{1874161}>0.
\end{equation}

The command \texttt{python scripts/higher\_level\_witness.py {-}{-}fejer}
checks seven symbolic identities for the inverse-vector argument,
the inverse quadratic form, the general and specialized gaps, and
the exact level-two endpoint. Independently, at $k=1,2,3,4,5,8$ it
reconstructs the full quotient-functional Grams and every original
PVM moment, checks the moving-average factorization and inverse
vector, and verifies exact PSD, congruence, moment identities,
normalization, outcome sums, and objective values. The all-level
proof is \cref{lem:path-psd,thm:unbounded-sm}; finite tests are audits.
The older unweighted construction remains independently checked by
\texttt{{-}{-}analytic} as a comparison.

Two auxiliary rational witnesses at $\alpha=9/5$ give gaps greater
than $1/1000$ at level three and $1/10000$ at level four. Their
complete matrices are in
\datapath{artifacts/prl/higher_level/alpha_9_5_L3.json} and
\datapath{artifacts/prl/higher_level/alpha_9_5_L4.json}; all blocks are
positive definite. These examples are independent cross-checks,
not prerequisites of the analytic theorem.

\begin{corollary}[No finite conditioned level gives the full randomness tradeoff]
\label{cor:unbounded-randomness}
For $s_k=8/q(\alpha_k)$ and every integer $k\geq1$,
\begin{equation}
 G_k^{\rm os}(s_k)>G_Q(s_k)=G_2^{\rm std}(s_k),\qquad s_k\downarrow2.
\end{equation}
Standard level two gives $G_Q(s)$ for every $2\leq s\leq2\sqrt2$.
The same statements hold with the constraint $S\geq s$.
\end{corollary}
\begin{proof}
Apply the contact criterion in \cref{sec:randomness} to
\cref{thm:unbounded-sm} and the standard SOS valid for all $0<\alpha<2$.
This proves the assertions for interior CHSH statistics. At $s=2$,
the probability bound and a deterministic strategy give $G=1$.
At $s=2\sqrt2$, the standard tilted bounds imply
$h\leq[q(\alpha)-2\sqrt2]/\alpha\to0$ as $\alpha\downarrow0$,
giving $G=1/2$, attained by the quantum strategy. For $S\geq s$,
each positive-tilt tangent remains a valid upper bound, while a witness
at the exact statistic remains feasible for the lower-bound constraint.
\end{proof}

\subsection{Reverse conversion and the nice-SOS soundness route}

\begin{proposition}[Reverse conversion at every native level]
\label{prop:reverse-conversion}
For every $k\geq1$, $\mathsf O_k\subseteq\mathsf D_{k+1}$. Therefore
\begin{equation}
 \mathsf D_1\subsetneq\mathsf O_1\subseteq\mathsf D_2
 \not\subseteq\mathsf O_k\quad\text{for all }k\geq1.
\end{equation}
The best uniform additive overhead in the reverse direction is one.
\end{proposition}
\begin{proof}
Factor the positive Gram matrices of a conditioned dual certificate.
Modulo the original quotient, it is a sum of terms
$P_{A,a|x}f(B)^*f(B)$ with $\deg f\leq k$; normalization and
outcome-consistency multipliers cancel in the polynomial identity.
By commutation and idempotency,
\begin{equation}
 P_{A,a|x}f(B)^*f(B)
 =(f(B)P_{A,a|x})^*(f(B)P_{A,a|x}).
\end{equation}
Each factor belongs to the standard word span of degree $k+1$.
This remains true for the eliminated outcome
$P_{A,1|x}=I-P_{A,0|x}$, so no outcome is omitted.
Alternatively, for a feasible standard level-$(k+1)$ moment matrix,
restrict to $\{uP_{A,a|x}:u\in W_k\}$ for each $a,x$.
Both outcome blocks are PSD by congruence, and summing them gives
the common Bob moment functional by completeness. This proves the
same inclusion by support-function duality and attainment.
The strict inclusion $\mathsf D_1\subsetneq\mathsf O_1$ rules out
zero uniform overhead; it does not imply that every individual
certificate or every level needs the extra degree.
\end{proof}

\begin{corollary}[Degree of exact nice-SOS inputs to compiled soundness]
\label{cor:compiled-degree}
Consider a soundness argument that uses an exact nice-SOS identity
for $q(2-\epsilon)I-F_{2-\epsilon}$ in the stated raw-PVM convention,
with Bob-word factors of degree at most $d$.
Then $d=\Omega(\epsilon^{-1/2})$, with the lower asymptotic constant
in \cref{eq:fejer-liminf}. At $\alpha_k$, any such identity for
$[q(\alpha_k)+\delta]I-F_{\alpha_k}$ requires $d>k$ whenever
$0\leq\delta\leq8r_k^2(1-2r_k)^2$.
\end{corollary}
\begin{proof}
A nice-SOS identity with this factor degree is an $\mathsf O_d$
certificate. Apply \cref{eq:degree-lower-growth,eq:fejer-liminf};
for the relaxed target, evaluate on the witness and use
\cref{eq:path-value-gap}.
\end{proof}

The relevance to compiled nonlocal games is the certificate-to-soundness
route of \cite{CuiFalorNatarajanZhang2025}: the algebraic target bound
is supplemented by a cryptographic error depending on the certificate.
Our corollary limits the degree of that input, not every possible
soundness proof, and supplies no bound on the security parameter or
coefficient norms. The statement for exact targets is pointwise in each
fixed tilt; simultaneous limits in the tilt and security parameter need
separate uniform estimates. A POVM/localizer filtration such as
\cite{KlepEtAl2025} requires an explicit degree map before transferring
this quantitative statement. The fixed-party distinction for the
extended expression of \cite{MehtaPaddockWooltorton2025} is given in
\cref{sec:open}.

\section{How the finite-level limitations differ}
\label{sec:qualitative}
At level one, Alice-conditioned blocks enforce positivity of joint probabilities
and carry total-degree-three moments absent from the standard raw
level-one truncation. At native level two the blocks even carry selected
total-degree-five moments $A_{a|x}u^\dagger v$, with Bob words
$|u|,|v|\leq2$, whereas standard level two stops at total degree four.
For instance $u=P_{B,0}P_{B,1}$ and $v=P_{B,1}P_{B,0}$ give the reduced
word $A_{a|x}P_{B,1}P_{B,0}P_{B,1}P_{B,0}$. This maximum-degree comparison
does not imply containment of truncated moment sets: the blocks lack
mixed-Alice-question words. The exact separating witnesses show
that a standard quantum-bound certificate cannot always be represented
by a single-question SOS at that level. The optimal-face criterion gives
a complete feasibility test for the latter question. The occurrence of
cross-question terms in a particular numerical standard Gram matrix is
not an invariant obstruction, since Gram representations are nonunique.
A parameter-dependent analytic criterion connecting these observations
remains open.

\section{Computation and reproducibility}
\label{sec:repro}

\subsection{Implementation and audits}
\label{sec:full}

Two independent implementations of each hierarchy exist side by side: the
outcome-eliminated quotient implementation and an outcome-complete
implementation in which no outcome is eliminated and completeness
$\sum_aP_{q,a}=I$ is imposed as explicit linear relations
$\sum_a m(u^\dagger P_{q,a}v)=m(u^\dagger v)$ on the moment body (with exact
rational row reduction to remove the dependent constraints that otherwise
make interior-point KKT systems singular).  The two implementations agree
on the optimum of every Bell-coordinate probe at levels one and two for all
controls (worst deviation $2.5\times10^{-7}$).  This is probe-level
agreement: it supports the consistency of the two implementations on the
tested instances; ruling out the quotient as a source of all conceivable
differences would require an algebraic equivalence map for the outcome
elimination, which we do not construct.

The historical control suite runs under both CLARABEL and SCS, flagged
candidates are re-solved under both, and the remaining large-scale
searches use a two-solver fallback chain. The five optimal-face
certificates use CLARABEL for the search; their acceptance depends on
independent exact verification, not a second numerical solver. The M2 control suite
additionally records raw primal/dual
matrices, constraint duals, spectra, residuals, environment manifests, and
SHA-256 checksums, verified by a solver-free checker
(\codepath{scripts/verify_results.py}).

\subsection{Numerical saturation and its tolerance dependence}
\label{sec:degree-survey}

The optional script \codepath{scripts/survey_conditioned_degree.py} solves
the full conditioned SDP using four $(2k+1)\times(2k+1)$ involution-word
blocks and shared quotient moments. For each listed tilt it brackets
the first numerical value whose estimated gap above $q$ is at most
$10^{-6}$, using nested levels between one and twenty. The preceding
level is also evaluated. These are estimates of a finite-tolerance
crossing, not exact closing levels.
\begin{center}
\small
\begin{tabular}{lrrrrrrr}
\toprule
$\epsilon=2-\alpha$ & .50 & .30 & .15 & .08 & .04 & .03 & .02\\
Numerical saturation level & 3 & 4 & 6 & 8 & 11 & 12 & 15\\
Proved exact-degree lower bound & 2 & 3 & 4 & 5 & 8 & 9 & 10\\
\bottomrule
\end{tabular}
\end{center}
The second row is independently reproduced from the SDP; the third
uses the strict threshold in \cref{eq:fejer-threshold}.
A descriptive log-log fit of the seven saturation levels gives
$d_{\rm num}\approx2.23\epsilon^{-0.492}$, and
$d_{\rm num}\sqrt\epsilon$ ranges from $2.08$ to $2.32$.
This fit is confined to the sampled range.

The stored record \datapath{artifacts/prl/fejer/numerical_saturation.json}
contains each solver status, primal/dual objective, minimum eigenvalues,
normalization residual, and dual stationarity residual. CLARABEL uses
requested gap and feasibility tolerances $10^{-10}$, but most boundary
solves return \texttt{optimal\_inaccurate}. At the reported levels and
their predecessors, the maximum dual stationarity residual is below
$6.6\times10^{-7}$; computed gaps can be slightly negative and are
retained in the record. These residuals are not rigorous error bars.
An independent implementation using four symmetric matrix variables
agrees within $6\times10^{-9}$ at $(\alpha,k)=(1.7,3)$.
The command-line option \texttt{{-}{-}audit} repeats the sensitive
$(1.97,12),(1.97,13),(1.98,14),(1.98,15)$ cases with that implementation.
At $(1.97,12)$ its gap is about $2.6\times10^{-7}$: it falls below
$10^{-6}$ but exceeds $10^{-7}$, making the inferred level dependent
on the chosen threshold. No exact degree is assigned from these data.

In particular, numerical saturation at seven tilts cannot establish
finite exact closure for every fixed $\alpha<2$, nor an asymptotic
$O(\epsilon^{-1/2})$ upper bound. A fixed absolute tolerance cannot
resolve that endpoint question: already the nonsignaling bound gives
$\omega_k^{\rm os}(F_{2-\epsilon})-q(2-\epsilon)
\leq4-q(2-\epsilon)=O(\epsilon)$ at every $k\geq1$.
The exact analytic lower bound and the numerical points in Fig.~1 of
the Letter therefore refer to different target tolerances.

\subsection{Exactification protocol}

The historical rational primal and dual certificates follow the pattern:
\emph{find} a candidate with a solver, \emph{rationalize} it
(\texttt{Fraction.limit\_denominator}), \emph{project} it exactly onto the
identity's solution space by minimum-norm exact rational correction, and
\emph{verify} in exact arithmetic: coefficient-wise polynomial identities
in the quotient algebra, exact $LDL^\ast$ PSD tests, exact sign and root
computations over $\QQ$ and quadratic fields, and CRootOf root
comparisons. The algebraic optimal-face certificates instead use exact
elimination and rationalize only free coordinates, as in \cref{sec:face}.
Primal witnesses additionally use a small exact admixture of
an exact feasible reference point to lift rationalized blocks off the PSD
boundary (at levels $\geq2$ the reference point is feasible but not
strictly feasible --- the deterministic vectors span only the consistency
subspace --- so the final block PSD is always verified exactly a
posteriori), and carry consistency residuals on the last-outcome block so that
cross-block consistency and normalization hold exactly.  No verification
step calls an SDP solver.

\Needspace{7\baselineskip}
\subsection{Two-command reproduction}
\label{sec:repro-commands}

\paragraph{Obtaining the code and data.}
The public code repository is
\begin{quote}
\url{https://github.com/Fumin111994/Unbounded-degree-overhead-for-Alice-conditioned-quantum-Bell-certificates}.
\end{quote}
For the results reported here, use commit
\href{\CodeRepositoryURL/tree/\CodeRevision}{\texttt{5d7404e}}
\cite{Wang2026Code}; all script links in this supplement are fixed to
that revision. Download
\href{\DataRecordURL}{\nolinkurl{bell_certificate_data_v0.1.0.zip}}
from the Zenodo record \cite{Wang2026Data},
\begin{quote}
\url{https://doi.org/10.5281/zenodo.22672612}.
\end{quote}
Extract the data ZIP into the code repository root, so that
\nolinkurl{artifacts/} and \nolinkurl{figures/} sit alongside
\nolinkurl{scripts/} and \nolinkurl{src/}.
All file paths below are relative to this root. The
\codepath{REPRODUCTION.md} guide gives additional generation commands.
With a Python environment active, install the pinned dependencies and
check the downloaded data before running any command that rewrites reports:
\begin{verbatim}
python -m pip install -r requirements-lock.txt
python scripts/check_data.py
\end{verbatim}
The \codepath{scripts/check_data.py} check compares the extracted files
with the SHA-256 manifest supplied with the code. The code uses the MIT
license and the archived research data use CC BY 4.0.

\paragraph{Verification and numerical regression.}
The reproduction has two commands.  The first is purely certificate-based
and calls no solver anywhere:
\begin{quote}
\texttt{python scripts/run\_reproduction.py}
\end{quote}
The driver invokes the core checker
\codepath{scripts/verify_certificates.py}, which re-verifies stored certificates and reconstructs both analytic
families from closed formulas: each affine identity is
rebuilt and checked coefficient-wise, every Gram block is tested PSD by
exact rational $LDL^\ast$, consistency and normalization are re-checked,
and every certified comparison is redone exactly (squaring arguments over
$\QQ$, CRootOf comparisons, root isolation from the stored polynomials);
\texttt{cvxpy.Problem.solve} is patched to raise throughout that checker, and a
missing artifact or missing certificate payload is a FAIL, never a skip.
This includes the continuous level-three certificate: the original
rational-function identity, pole exclusion, and all 84 Bernstein matrices
are checked by \codepath{scripts/quantum_interval.py} (\cref{sec:interval}).
The exact randomness-cost witness, including its mixture, relabeling,
and rational guessing and entropy comparisons, is checked by
\codepath{scripts/randomness_cost.py} (\cref{sec:randomness}).
The Fej\'er family (\texttt{{-}{-}fejer}) and the older unweighted
family (\texttt{{-}{-}analytic}) are audited symbolically and in the
original PVM basis, together with the two auxiliary higher-level
counterexamples, by \codepath{scripts/higher_level_witness.py}
(\cref{sec:unbounded}).
It also includes the symbolic BP15 standard and $1{+}AB$ SOS for all
$0\leq\alpha<2$, its attaining strategy, and the four outcome congruences
(\cref{lem:standard-tilted,prop:sandwich}), checked by
\codepath{scripts/standard_tilted_sos.py}. It also covers the level-one
certificates (\cref{sec:L1}), the level-two
separation and the $\ell_1$ neighbourhood (\cref{sec:L2}), the anchor and
family data (\cref{thm:L1-std,thm:L1-sep}), the tangent fan with full
witness block data (\cref{thm:L2-threshold}(i)), the exact closure points
and level-three certificates (\cref{thm:L2-threshold}(ii)--(iii)), and the
sandwich witness (\cref{prop:sandwich}). The driver also runs the stored
control-artifact verifier, the level-one family and anchor-coverage
checks, and the level-one and level-two exactification scripts in
verification-only mode. The second command is numerical
regression:
\begin{quote}
\texttt{python scripts/run\_reproduction.py {-}{-}full}
\end{quote}
which additionally runs the unit-test suite (including the
solver-free certificate verification above; some call
CLARABEL/SCS) and re-runs the floating-point controls and audits.  The
artifact tree \datapath{artifacts/} contains every certificate as JSON with
exact rational or algebraic data. Environment: Python $3.12.9$, \texttt{cvxpy} $1.8.2$,
\texttt{numpy} $2.4.4$, \texttt{scipy} $1.17.1$, \texttt{sympy} $1.14.0$
(\codepath{requirements-lock.txt}).
These versions agree with the recorded environment of the earlier
control runs and with a fresh runtime inspection; the environment record
for this revision is \datapath{artifacts/prl/method_revision/environment.json}.

\section{Literature boundary and open problems}
\label{sec:open}

\paragraph{Boundary.}
Finite-level SDP attainment in Lemma~\ref{lem:attainment} is distinct from
finite convergence to the commuting-operator value. Fanizza
\emph{et al.}\ \cite{FanizzaEtAl2025} construct games for which no finite
NPA level reaches that limiting value. Their result does not assert a
duality gap or failure of an optimizer to exist for each finite SDP.
The degree-one conversion $\mathsf D_1\subseteq\mathsf O_1$ is
\cite[Theorem~5.3]{CuiFalorNatarajanZhang2025} (v1, 2025); the same
preprint's Section~5.2 level-one value-equality statement is refuted as
stated, under its raw v1 convention, by
\cref{cor:cfnz} of this supplement (not a corollary of that preprint).
The arXiv record checked on 8 September 2026 lists only v1.
The conversion theorem itself stands and is used here as
\cref{prop:d1}.  The sequential-NPA finite-level convergence and flatness
results are \cite{KlepEtAl2025} (v2, 2026); we keep its POVM/localizer
filtration distinct from the PVM quotient used here and make no claim
about its convention (\cref{rem:audit}).  In July 2026 three
parallel works established, from different directions, that the standard
NPA hierarchy is not exact at finite level near the critical doubly-tilted
family: Pakhunov's companion
pair \cite{Pakhunov2026,Pakhunov2026Companion} (no finite level is exact
near the critical tilt; a phase transition in exactness) and Chaturvedi's
\cite{Chaturvedi2026} (no finite level characterizes the complete quantum
set in the simplest Bell scenario).  Our
\cref{thm:unbounded-sm} concerns a distinct obstruction: standard level
two is exact throughout the family, while no fixed finite conditioned
level is exact throughout it. This does not assert that a single
fixed tilted-CHSH functional remains nonexact at every finite level.
The
quantum values of tilted and doubly-tilted families are from
\cite{Acin2012Tilted,GigenaEtAl2025}; the level-$1{+}AB$ SOS decomposition
is from \cite{BampsPironio2015}; the almost-quantum comparison level is
from \cite{Navascues2015AlmostQuantum}.

\paragraph{Positive-definite extension literature.}
Bakonyi and Timotin \cite[Proposition~3.4]{BakonyiTimotin2011} prove
positive-definite extension from symmetric word balls on the infinite
dihedral group. This concerns one partially defined positive-definite
function. It does not impose common outcome sums on four simultaneous
extensions. In our path setting, every set of vertices of diameter at
most $2k$ lies in a translate of $W_k$, so each individual truncated
block meets their positivity hypothesis and has an extension.
The Bell witness instead obstructs choosing four extensions with equal
outcome sums on the full algebra: such a choice would be an all-level
feasible assemblage and obey the quantum bound by hierarchy convergence.
Thus our contribution is the explicit compatible truncated assemblage,
its separation from a standard degree-two exact target, and its degree
scaling. Neither the general extension question nor the elementary
rank-one PSD criterion is claimed as a new result.

\paragraph{Fixed-party comparison with compiled tilted expressions.}
The single-question SOS of Mehta, Paddock, and Wooltorton
\cite[Eqs.~(3.15)--(3.17)]{MehtaPaddockWooltorton2025} certifies an
extended expression $S_{\theta,\phi}$ with a Bob marginal proportional
to $I\otimes(B_0+B_1)$. It should not be identified with our
$F_\alpha$, whose marginal is $\alpha A_0\otimes I$, from their shared
attaining strategy alone. For example, write $s=\sin(2\theta)$,
$c=\cos(2\theta)$, and $\tan\mu=s$. Direct substitution in that
reference's expression gives
\begin{equation}
 s^2\cos\mu\,S_{\theta,\mu}
 =s^2A_0(B_0+B_1)+A_1(B_0-B_1)+cI(B_0+B_1).
\end{equation}
For $0<\theta<\pi/4$ these Bell coefficients differ from those of
$F_\alpha$ with the parties fixed. Thus the displayed SOS for $S$
does not itself provide an $\mathsf O_k$ certificate for $qI-F_\alpha$;
an explicit conversion identity would be needed. Our degree obstruction
is not a claim of failure of the compiled soundness theorem for $S$.

\paragraph{Open problems.}
Exact optimal-bound certification on $[13/10,3/2]$ and at the three
level-two points is established. The remaining problems are:
\begin{enumerate}[label=(\alph*)]
\item Prove an interval of level-two exactness and characterize its
boundary using the parameter-dependent optimal-face system. A unique
transition near $1.28427$ remains a numerical conjecture.
\item Determine the full level-three exactness region beyond the interval
certified in \cref{sec:interval}.
\item Prove the numerically identified two-regime formula for the
Alice-conditioned level-one value in \cref{sec:L1-form}.
\item Determine whether each fixed $0<\alpha<2$ has a finite exact
conditioned certificate. Prove or refute an exact
$O((2-\alpha)^{-1/2})$ upper bound matching the exponent of
\cref{eq:degree-lower-growth}, and determine the leading constant.
The former one-extra-level conjecture is refuted by
\cref{thm:unbounded-sm}, which proves
$\mathsf D_2\not\subseteq\mathsf O_k$ for every finite $k$.
The numerical saturation fit in \cref{sec:degree-survey} is not an exact
upper bound; a fixed positive tolerance cannot determine the asymptotic
exact-closure degree.
Sharp conversion bounds for normalized families at specified nonzero
error tolerance remain a separate question; \cref{eq:path-value-gap}
gives an explicit obstruction for a sequence of tolerances.
\end{enumerate}

\bibliographystyle{alpha}
\bibliography{references}